\newcommand{\ket}[1]{\ensuremath{\left| #1 \right \rangle}}
\newcommand{\bra}[1]{\ensuremath{\left \langle #1 \right |}}
\def\ketbra#1#2{{\vert#1\rangle\!\langle#2\vert}}
\newcommand{\tder}[2][]{\frac{d #1}{d #2}}
\newcommand{\eye}{\mathds{1}}
\definecolor{burgundy}{rgb}{0.5, 0.0, 0.13}
\begin{document}

\title{On the Universality of the Quantum Approximate Optimization Algorithm}

\subtitle{ \\ M.~E.~S.~Morales, J.~D.~Biamonte, and Z.~Zimborás}

\author{}


\institute{M.~E.~S.~Morales \at
              Deep Quantum Laboratory, Skolkovo Institute of Science and Technology, 3 Nobel Street, Moscow, Russia 121205 \\
         \email{mauro990@gmail.com}
         \url{http://quantum.skoltech.ru}
            \and  
            J.~D.~Biamonte \at
                Deep Quantum Laboratory, Skolkovo Institute of Science and Technology, 3 Nobel Street, Moscow, Russia 121205 \\
              \email{j.biamonte@skoltech.ru} 
              \url{http://quantum.skoltech.ru}
           \and
           Z.~Zimborás \at
            Wigner Research Centre for Physics of the Hungarian Academy of Sciences;\\ MTA-BME Lendület Quantum Information Theory Research Group;\\ Mathematical Institute, Budapest University of Technology and Economics, Hungary.  \\
         \email{zimboras.zoltan@wigner.mta.hu}
}

\date{}


\maketitle

\begin{abstract}
The quantum approximate optimization algorithm (QAOA) is considered to be one of the most promising approaches towards using near-term quantum computers for practical application. In its original form, the algorithm applies two different Hamiltonians, called the mixer and the cost Hamiltonian, in alternation with the goal being to approach the ground state of the cost Hamiltonian. Recently, it has been suggested that one might use such a set-up as a parametric quantum circuit with possibly some other goal than reaching ground states. From this perspective, a recent work [S.~Lloyd, arXiv:1812.11075] argued that for one-dimensional local cost Hamiltonians, composed of nearest neighbor ZZ terms, this set-up is quantum computationally universal, i.e., all unitaries can be reached up to arbitrary precision. In the present paper, we give the complete proof of this statement and the precise conditions under which such a one-dimensional QAOA might be considered universal.  We further generalize this type of universality for certain cost Hamiltonians with ZZ and ZZZ terms arranged according to the adjacency structure of certain graphs and hypergraphs.

\keywords{Quantum Computation \and  Variational Quantum Algorithms  \and Universal Quantum Gate Sets  \and Quantum Control}
\end{abstract}

\section{Introduction}
\label{sec:intro}

A question in the field of quantum information processing is whether contemporary quantum processors will in the near future be able to solve problems more efficiently than classical computers. 
Combinatorial optimization problems are of special interest, for which a class of algorithms under the name of Quantum Approximate Optimization Algorithm (QAOA) have been proposed \cite{original-qaoa}. QAOA consists of a bang-bang protocol \cite{bang-bang} that is expected to solve hard problems approximately. This procedure involves the unitary evolution under a Hamiltonian encoding the objective function of the combinatorial optimization problem and a second non-commuting mixer Hamiltonian. Since its proposal, QAOA has been extensively studied to understand its performance \cite{Hadfield,chuang2019,hastings2019}, for establishing quantum supremacy results \cite{supremacy} and for solving several optimization problems \cite{constraint,crooks,Marsh2019}.
This algorithm together with others such as the variational quantum eigensolver (VQE) \cite{original-vqe,variational,2019arXiv190304500B} are part of the so called variational hybrid quantum/classical algorithms, combining the computational power of a quantum computer to prepare quantum states with a classical optimizer. These variational algorithms (including QAOA) have shown several advantages such as robustness to noise, yet more study is required to know the limitations in algorithms such as QAOA. Recent work has found limitations in parametrized quantum circuits trained with classical optimizers wherein for large enough problem sizes the algorithms suffer from so called barren plateaus from which exponentially low probability to escape don't allow the algorithms to achieve an optimal result \cite{mcclean2018barren}. The expressive power of parametrized quantum circuits, namely, the set of probability distributions from which a parametrized circuit is capable to sample from, has also been studied \cite{expresiveness2018}. In this paper, we study the capacity of QAOA to perform universal quantum computation in the sense that sequences of QAOA unitaries can approximate arbitrary unitaries (as we will detail below).

A proof-sketch of the computational universality of a class of QAOA quantum circuits has been given in Ref.~\cite{Lloyd-proof}. In our work, we make the statement concerning the universality of QAOA circuits more precise and give the complete proof of universality inspired by this previous work. We also give the conditions under which the proof in Ref.~\cite{Lloyd-proof} applies. In addition to this, we expand and generalize the proof to include QAOA circuits defined by other classes of cost Hamiltonians. Morover, we also discuss cases when universality is not reached, which helps to further advance the understanding of limitations of QAOA. For our proofs, we employ techniques from Lie group theory utilized previously in the context of quantum control  \cite{JURDJEVIC1972313,Zoltan,Zeier,Dirr,Zoltan2} and also in proving universality of different families of gate sets \cite{universal-gates,bremner2004,Zoltan3,Sawicki}. 
In particular, we will make connections with a graph process named zero forcing that was already connected to Lie algebraic controllability questions \cite{Burgarth-control,zeroforce}. Previous works \cite{bang-bang,wang2018,yuehzen2019} have related controllability to QAOA, our work continuous in this direction and reveals that there are more fruitful connections to be made between these topics.  A recent work by one of the present authors \cite{2019arXiv190304500B} proved that an objective function, expressible in terms of local measurements, can be minimized to prepare arbitrary quantum states as output by quantum circuits. The work, however, assumed the existence of universal variational sequences, such as those needed to realize a universal gate set, but did not prove this reachability. Hence the sequences developed here would find further applications therein, as well.

The paper is organized as follows. We provide some background to our work in Sect.~\ref{sec:background}; the QAOA algorithm is introduced together with the notion of universality used in Sect.~\ref{sec:intro-qaoa} and a brief introduction on quantum control and its relation to QAOA in Sect.~\ref{sec:intro-control}. We then proceed to complete the proof of Ref.~\cite{Lloyd-proof} concerning the universality of a 1d QAOA system in Sect.~\ref{sec:universality-1d}. The generalization of the universality proof to other settings is presented in Sect.~\ref{sec:universality-general} and Sect~\ref{sec:hypergraphs}. Finally, we close with the conclusion and outlook in Sect.~\ref{sec:conclusion}.

\section{Background and setting}\label{sec:background}

Here we summarize the background of our work. We briefly introduce the concept of QAOA and give the precise definition of universality which is used in this article. Then we introduce some notation from quantum control and explain how it relates to our proof of the universality of QAOA under certain conditions. 

\subsection{Quantum Approximate Optimization Algorithm }
\label{sec:intro-qaoa}

The quantum approximate optimization algorithm is used to find solutions to combinatorial optimization problems. To introduce the algorithm, we follow the presentation given in \cite{original-qaoa}. A more complete analysis of the algorithm can be found therein.

The algorithm is defined by a Hamiltonian $H_Z$ encoding the objective function $f:\{0,1\}^n \to \mathbb{R}$ of a combinatorial optimization problem which we wish to minimize (or alternatively, maximize). This Hamiltonian is assumed to be diagonal in the computational basis and is denoted as the cost Hamiltonian. There is also a second Hamiltonian $H_X$ denoted as mixer Hamiltonian which does not commute with $H_Z$.

 First, fix an integer $p$ and 2$p$ random angles $\boldsymbol{\gamma}=(\gamma_1, \gamma_2 ... \gamma_p)$, $\boldsymbol{\beta}=(\beta_1,..\beta_p)$. Then, as a subroutine, prepare using a quantum computer an ansatz state 
 \begin{equation}\label{eq:qaoa-ansatz}
     \ket{\boldsymbol{\gamma},\boldsymbol{\beta}}=U(H_X,\beta_p)U(H_Z,\gamma_p)...U(H_X,\beta_1)U(H_Z,\gamma_1)\ket{+}^{\otimes n} .
 \end{equation}
 
 Where $U(H,\alpha) = e^{-i\alpha H}$ and $\ket{+} = \frac{1}{\sqrt{2}}(\ket{0} + \ket{1})$. This ansatz state is then measured in the computational basis which results in a bitstring $z\in \{0,1 \}^n$. We can then evaluate $f(z)$ by sampling enough times from the ansatz state. Then the following expected value can be approximated 
 
 \begin{equation}
F_p(\boldsymbol{\gamma},\boldsymbol{\beta}) = \bra{\boldsymbol{\gamma},\boldsymbol{\beta}}H_Z\ket{\boldsymbol{\gamma},\boldsymbol{\beta}}.
\end{equation}

With a classical optimization algorithm we seek to minimize this expectation value, and thus we update the angles $\boldsymbol{\gamma}=(\gamma_1, \gamma_2 ... \gamma_p)$, $\boldsymbol{\beta}=(\beta_1,..,\beta_p)$ for the next round. We repeat this procedure for several rounds.

The operator $H_X$ is usually defined as
\begin{equation}
    H_X = \sum_{i=1}^{n} X_i,
\end{equation}
where $X_i$ is the usual Pauli matrix acting on the $i$th qubit.

\subsection{Universality of QAOA as a parametrized quantum circuit}\label{subsec:universality-def}

To study universality we need to define what do we mean by it in the context of QAOA. As explained before, QAOA involves a subroutine where a quantum circuit outputs a quantum state. The family of quantum circuits defined by QAOA from a set of angles and a sequence length is given by the product of unitaries in Eq.~\eqref{eq:qaoa-ansatz}. As discussed in \cite{Nest_2007}, universality in the quantum circuit model is related to the possibility of generating arbitrary unitary operations by composition of elementary gates in a gate set. In this sense we can consider for a choice of $H_Z$ and $H_X$ the unitaries $U(H_Z,\alpha)$ and $U(H_X,\beta)$ for any angles $\alpha$, $\beta$ as an elementary gate set. Thus for fixed Hamiltonians $H_Z$, $H_X$ acting on $n$ qubits and $p\in\mathbb{N}_{>0}$ the family of circuits defined by QAOA corresponds to the set of unitaries
\begin{equation}
    \mathcal{C}_{H_Z,H_X}^{p}{=} \big\{U(H_X,\beta_p)U(H_Z,\gamma_p)...U(H_X,\beta_1)U(H_Z,\gamma_1) |  \gamma_j,\beta_j \in [0,2\pi] \big\}, 
\end{equation}
where $U(H,\alpha) = e^{-i\alpha H}$. Thus, we can define
\begin{equation} \label{eq:inf_sequences}
\mathcal{C}_{H_Z,H_X} = \bigcup_{p=1}^{\infty} \mathcal{C}_{H_Z,H_X}^{p}. 
\end{equation}

For a problem size  $n$ and a choice of $H_Z$ and $H_X$ acting on $n$ qubits we say QAOA is universal if any element in the full unitary group $\mathcal{U}(2^n)$ is approximated to arbitrary precision (up to a phase) by an element of $\mathcal{C}_{H_Z,H_X}$.

Note that our definition of universality does not make reference to the sequence length $p$ of Eq.~\eqref{eq:qaoa-ansatz}. Studying the sequence length at which any unitary in $\mathcal{U}(2^n)$ can be approximated for certain choices of Hamiltonians or even for unitaries in a subspace $\mathcal{A}\subseteq \mathcal{U}(2^n)$ may prove useful in tasks such as state preparation \cite{Hsieh2019,Hsieh2019-2}, modifications of QAOA where constrains are included \cite{QAOA-hard-soft} or for understanding the limitations of this algorithm \cite{akshay2019}. It would also be interesting to investigate universality in other variational quantum algorithms, see Ref.~\cite{herasymenko2019} for a recent study in this direction concerning variational quantum eigensolvers.

Finally, let us stress here again that the notion of universality here does not provide an algorithm that finds the solution of the objective function. It just quantifies the reachability properties of QAOA unitary sequences. An analogous notion of universality in classical variational neural networks was given by the universal approximation theorem \cite{Cybenko1989,HORNIK1991251,universal-approx} which states that under some weak assumptions feed-forward neural networks can approximate any continuous function defined on a compact subset of $\mathbb{R}^k$ without giving an algorithm for the approximation. 

\subsection{Quantum control}\label{sec:intro-control}

The Quantum Approximate Optimization Algorithm can be understood as a particular quantum control problem. Hence it will be useful to briefly introduce the concept of reachability within quantum control theory. 

Let us consider a quantum system with a drift Hamiltonian $H_0$, and assume further that one can turn on or off  the Hamiltonians $H_j$ ($j=1, \ldots, n$) with time-dependent coupling-strengths (control functions) $u_j$, and in this way obtain the following time-dependent control Hamiltonian \begin{equation}\label{eq:total-ham}
  H(t) = H_0 + \sum_{j=1}^q u_j(t) H_j  \, .
\end{equation}
The evolution of the (pure) state of a quantum system is then described by the controlled Schr\"odinger equation 
\begin{equation}\label{eq:schrodinger}
i\hbar \frac{d}{dt}\ket{\psi} = H(t) \ket{\psi} \; ,   {\text{ with initial condition}} \; \; \ket{\psi (t=0)} = \ket{\psi_0}.
\end{equation}

 The solution to equation \eqref{eq:schrodinger} can be written using a unitary propagator $\ket{\psi(t)}=U(t)\ket{\psi_0}$, which can be obtained as the solution to the following differential equation

\begin{equation}\label{eq:schr-unitary}
    \tder{t} U(t) = \left( -iH_0 + \sum_{j=1}^{q} -iu_j(t) H_j \right) U(t) \; \; {\text{with}} \; \; U(0) = \eye.
\end{equation}

We want to answer the following question: given a set of control Hamiltonians $\mathcal{P} = \{iH_1, iH_2, ..., iH_q\} $, which unitary propagators can we generate?

We assume that the control functions $u_j$ all belong to a set $\mathcal{F}$ of allowed control functions which correspond to piecewise constant functions, this choice will be relevant for QAOA.  Before delving more into the problem let us make some definitions.

\begin{definition}[Set of reachable unitaries]
Given a quantum system (described by a $d$-dimensional Hilbert space) with drift Hamiltonian $H_0$ and control Hamiltonians $\{H_j \}_{j=1}^q$, define the set of reachable unitaries at time $T>0$ as the set
\begin{equation}
    \mathcal{R}(T)=\{ W \in \mathcal{U}(d) : \exists u \in \mathcal{F}, \exists U(t) \; \text{solution of Eq.~\eqref{eq:schr-unitary}}, U(T,u) = W\},
\end{equation}
\end{definition}
and the set of reachable unitaries are 
\begin{align}
    \mathcal{R}&= \overline{\cup_{T>0} \mathcal{R}(T)} \nonumber \\
    &=\{ W \in \mathcal{U}(d) :   \forall \epsilon > 0 \; \exists T_\epsilon  \;, \exists U_\epsilon \in \mathcal{R}(T_\epsilon) \; \text{such that} \; \|W- U_\epsilon\| \le \epsilon \},
\end{align}
where $\| \cdot \|$ denotes the operator norm.

\begin{definition}[Generated Lie Algebra]
Given a set of Hamiltonians $\mathcal{P} = \{iH_1, iH_2, ..., iH_q\}$, we call the  smallest real Lie algebra $\mathcal{L}$ containing the elements of $\mathcal{P}$  the {\it generated Lie algebra} of $\mathcal{P}$. We will denote the generated Lie algebra as

\begin{equation}
    \mathcal{L} = \langle \mathcal{P} \rangle_{Lie} =  \langle \{iH_1, iH_2, ..., iH_q\} \rangle_{Lie}.
\end{equation}

\end{definition}

\begin{proposition}\label{thm:reachability}
Given a set of Hamiltonian generators $\mathcal{P}$ defining a set of unitary operators according to  Eq.~\eqref{eq:schr-unitary} (without a drift Hamiltonian $H_0$), then the reachable set of unitaries is the following \cite{alessandro2008introduction}
\begin{equation}
    \mathcal{R} = e^{\mathcal{L}} = \{ e^{A_1} e^{A_2} ... e^{A_m} : m \in \mathbb{N}, A_j \in \mathcal{L} \},
\end{equation}
    where $\mathcal{L}$ is the Lie algebra generated by $\mathcal{P}$. Moreover, if the quantum system is finite dimensional, we have that $e^{\mathcal{L}}=\{e^{A} : A \in \mathcal{L} \}$.
\end{proposition}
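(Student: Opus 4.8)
The plan is to establish the two inclusions $\overline{\cup_{T>0}\mathcal{R}(T)} \subseteq e^{\mathcal{L}}$ and $e^{\mathcal{L}} \subseteq \mathcal{R}$ separately, and then upgrade the product-of-exponentials description to a single exponential in the finite-dimensional case. The structural facts I would lean on are that, with $H_0=0$ and piecewise-constant controls taking arbitrary real values, Eq.~\eqref{eq:schr-unitary} defines a right-invariant, drift-free (hence \emph{symmetric}) control system on $\mathcal{U}(d)$. Symmetry is what will make $\mathcal{R}$ a subgroup rather than merely a subsemigroup, and it is exactly the feature that distinguishes the $H_0=0$ case treated here.

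For the forward inclusion I would use that a piecewise-constant control $u$ partitions $[0,T]$ into finitely many intervals on which each $u_j\equiv c_j^{(k)}$ is constant. On the $k$-th interval, of length $\tau_k$, the propagator solving Eq.~\eqref{eq:schr-unitary} is $\exp\!\big(\tau_k\sum_j(-ic_j^{(k)}H_j)\big)$, whose exponent is a real linear combination of the generators $-iH_j$ and therefore lies in $\mathcal{L}$. Concatenating the intervals expresses $U(T,u)$ as a finite ordered product of such exponentials, so $\cup_{T>0}\mathcal{R}(T)\subseteq e^{\mathcal{L}}$; taking closures then yields the inclusion, provided $e^{\mathcal{L}}$ is itself closed, a point I return to at the end.

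The reverse inclusion is the substantive direction. First, reachable unitaries are closed under products, and since each $e^{\tau X}$ has inverse $e^{-\tau X}$, available because $-\tau$ is an admissible constant control, the set is closed under inverses as well; hence $\mathcal{R}$ is a closed subgroup and it suffices to show $e^{A}\in\mathcal{R}$ for every $A\in\mathcal{L}$. I would build $\mathcal{L}$ up from the generators by its two defining operations, real linear combination and Lie bracket, and realize the associated group elements as limits that remain in $\mathcal{R}$ precisely because of the closure: the Lie--Trotter formula $e^{X+Y}=\lim_{n\to\infty}(e^{X/n}e^{Y/n})^{n}$ handles sums, and the commutator formula $e^{[X,Y]}=\lim_{n\to\infty}\big(e^{X/n}e^{Y/n}e^{-X/n}e^{-Y/n}\big)^{n^{2}}$ handles brackets. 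An induction on the iterated-bracket depth then gives $e^{A}\in\mathcal{R}$ for all $A\in\mathcal{L}$, and the group property of $\mathcal{R}$ promotes this to all finite products $e^{A_1}\cdots e^{A_m}$.

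For the single-exponential refinement I would invoke compactness: $\mathcal{L}$ is a real subalgebra of $\mathfrak{u}(d)$, so $e^{\mathcal{L}}$ is a connected subgroup of the compact group $\mathcal{U}(d)$, and a connected compact Lie group has surjective exponential map, whence every element is $e^{A}$ for a single $A\in\mathcal{L}$. I expect the main obstacle to be the closedness of $e^{\mathcal{L}}$, which is silently used in both inclusions: a connected Lie subgroup of $\mathcal{U}(d)$ need not be closed (an irrational winding on a torus is the standard pathology), and in that situation $\mathcal{R}$ equals the closure $\overline{e^{\mathcal{L}}}$ rather than $e^{\mathcal{L}}$ on the nose, so the statement must be read with this caveat. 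For the Hamiltonian families studied in this paper, however, $\mathcal{L}$ is conjugate to a compact reductive algebra such as $\mathfrak{su}(2^n)$, whose connected subgroup is compact and therefore closed, so the equality holds exactly; I would make this compactness hypothesis explicit in the statement.
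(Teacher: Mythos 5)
Your proof is correct, but there is nothing in the paper to compare it against: the paper does not prove Proposition~\ref{thm:reachability} at all, it states it as a known result and cites D'Alessandro's textbook \cite{alessandro2008introduction}. What you have written is essentially the standard argument behind that citation (the Jurdjevic--Sussmann theory of drift-free right-invariant systems on Lie groups): the forward inclusion by decomposing a piecewise-constant propagator into factors $\exp\bigl(\tau_k\sum_j(-ic_j^{(k)}H_j)\bigr)$ whose exponents lie in $\mathrm{span}(\mathcal{P})\subseteq\mathcal{L}$, and the reverse inclusion by noting that $\mathcal{R}$ is a closed subgroup and that the set of directions $\{X: e^{tX}\in\mathcal{R}\ \forall t\in\mathbb{R}\}$ contains the generators and is stable under real linear combinations and Lie brackets via the Lie--Trotter and group-commutator limits, hence contains all of $\mathcal{L}$. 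One wording slip: inverses are not obtained because ``$-\tau$ is an admissible control'' (time runs forward); they are obtained by negating the control \emph{values} over a positive time interval, which is exactly what drift-freeness permits. The mathematical content of your step is right.

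Your closedness caveat is a genuine and correct observation about the statement itself, not a gap in your argument. With $\mathcal{R}$ defined as a closure, the equality $\mathcal{R}=e^{\mathcal{L}}$ can hold only when $e^{\mathcal{L}}$ is closed, and this can fail for the proposition as literally stated: for the single generator $i(Z_1+\alpha Z_2)$ on two qubits with $\alpha$ irrational, $e^{\mathcal{L}}$ is an irrational winding of a two-torus and $\mathcal{R}=\overline{e^{\mathcal{L}}}$ is strictly larger; in general one can only assert $\mathcal{R}=\overline{e^{\mathcal{L}}}$. The caveat is harmless for this paper, since every universality proof here exhibits $\mathcal{L}$ containing $\mathfrak{su}(2^n)$ (up to phases), so $e^{\mathcal{L}}$ is $SU(2^n)$ or $U(2^n)$, which is compact and hence closed. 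A small refinement on your last step: the claim $e^{\mathcal{L}}=\{e^{A}:A\in\mathcal{L}\}$ does not actually need compactness of $e^{\mathcal{L}}$. Any subalgebra of $\mathfrak{u}(d)$ is reductive, $\mathcal{L}=\mathfrak{z}\oplus[\mathcal{L},\mathcal{L}]$ with $[\mathcal{L},\mathcal{L}]$ compact semisimple, and every element of $e^{\mathcal{L}}$ factors as a commuting product $e^{X}e^{Y}=e^{X+Y}$ with $X\in\mathfrak{z}$, $Y\in[\mathcal{L},\mathcal{L}]$, using surjectivity of $\exp$ on connected abelian groups and on compact connected groups; so the single-exponential form holds even when $e^{\mathcal{L}}$ is not closed. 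Your compactness argument, however, covers every case the paper uses.
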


Proposition \ref{thm:reachability} motivates us to study the Lie algebra generated by a set of Hamiltonians. To understand whether a set of Hamiltonian interactions $\mathcal{P}$ can generate another set $\mathcal{Q}$, we need to check the condition $\langle \mathcal{P} \rangle_{Lie} = \langle \mathcal{P} \cup \mathcal{Q} \rangle_{Lie}$. 

In the QAOA set-up, we have the control Hamiltonians $H_Z$ and $H_X$, and we are interested in knowing whether the Lie algebra $\mathcal{L} = \langle iH_Z, iH_X \rangle_{Lie}$ generates (up to a phase) the entire unitary group $\mathcal{U}(2^n)$.
In the examples to follow, we treat families of QAOA gates when universality holds and also mention cases when it doesn't.
Our main proof strategy will be to show either that  $e^{\mathcal{L}} $ contains some gates that are already known to form a universal gate set, or to show that due to some symmetry property we cannot reach all gates.

\section{Proving universality in 1-D set up}\label{sec:universality-1d}

In \cite{Lloyd-proof}, a  derivation was given for the universality of the QAOA in terms of two Hamiltonians defined on a $1-$dimensional system. Here we give the complete proof and the precise conditions under which such a QAOA is universal. 

We start by defining the Hamiltonians in a 1 dimensional line as in \cite{Lloyd-proof}

\begin{equation}
\begin{split}
\label{eq:Hz}
H_Z &= \sum_j \omega_A Z_{2j} + \omega_B Z_{2j+1} + \gamma_{AB} Z_{2j}Z_{2j+1} + \gamma_{BA} Z_{2j+1} Z_{2j+2} \\
    &= \omega_A H_A + \omega_B H_B + \gamma_{AB} H_{AB} + \gamma_{BA} H_{BA},
\end{split}
\end{equation}
\begin{equation}\label{eq:Hx}
    H_X = \sum_j X_j.
\end{equation}

We shall prove that when the number of qubits $n$ is odd then the QAOA defined with the previous Hamiltonians is universal. For the $n$ even case we will see this is not the case. A graph representing the Hamiltonian $H_Z$ for $n=6$ is shown in Fig. \ref{fig:1d-line}.

 \begin{figure}[htbp]
\centering
\includegraphics[scale=1.3]{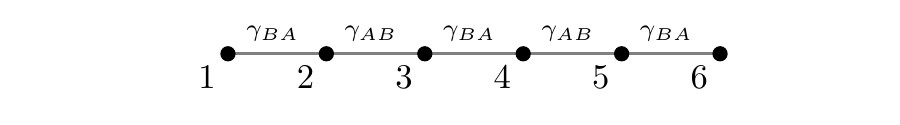}
\caption{System corresponding to Hamiltonian in Eq.~\eqref{eq:Hz} for $n=6$. Each node corresponds to qubits in the system and the edges to a two-body interaction. }
\label{fig:1d-line}
\end{figure}

For clarity, we make explicit the limits of the sums for each term in $H_Z$. Furthermore, we write in the upper limits of the sums the corresponding limits for {$n$ even $|$ $n$ odd}.
\begin{align}
    H_A &= \sum_{j=1}^{ \frac{n}{2} | \frac{n-1}{2}} Z_{2j}, & H_B &= \sum_{j=0}^{\frac{n}{2} -1 | \frac{n-1}{2}}  Z_{2j+1}, \\
     H_{AB} &= \sum_{j=1}^{\frac{n}{2} -1 | \frac{n-1}{2}} Z_{2j} Z_{2j+1}, & H_{BA} &= \sum_{j=0}^{\frac{n}{2} -1 | \frac{n-3}{2}} Z_{2j+1} Z_{2j+2},\\
     H_X &= \sum_{j=1}^{n} X_j.
\end{align}
It will also be useful to define
\begin{align}
     X_{odd} &=  \sum_{j=0}^{\frac{n}{2}-1 | \frac{n-1}{2}} X_{2j+1}, & X_{even}&= \sum_{j=1}^{\frac{n}{2} | \frac{n-1}{2}} X_{2j}.
\end{align}
We will start by proving the following lemma.

\begin{lemma}\label{lemma:decouple}
$iH_{Z1}=\omega_A iH_A + \omega_B iH_B \in \mathcal{L} =\langle \{iH_Z, iH_X \} \rangle_{Lie}$. Note that as a consequence we have that $iH_{Z2} = \gamma_{AB} iH_{AB} + \gamma_{BA} iH_{BA} \in \mathcal{L}$  
\end{lemma}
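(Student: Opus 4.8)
The plan is to exploit that the generated Lie algebra $\mathcal{L}$ is closed under the adjoint action $\mathrm{ad}_{iH_X}(\cdot) = [iH_X,\cdot]$, hence under the linear map $N := \mathrm{ad}_{iH_X}^2$ and under any real polynomial in $N$. Writing $H_Z = H_{Z1} + H_{Z2}$ with $H_{Z1}$ the one-body part and $H_{Z2}$ the two-body part, I would construct a polynomial $q(N)$ that annihilates $iH_{Z2}$ while acting invertibly on $iH_{Z1}$; applying $q(N)$ to $iH_Z \in \mathcal{L}$ then yields a nonzero multiple of $iH_{Z1}$, placing it in $\mathcal{L}$. This is a spectral-filtering argument: the one-body and two-body Pauli terms sit in different eigenspaces of $N$, and the separation of eigenvalues lets one project onto the one-body sector.

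First I would record the spectral action of $\mathrm{ad}_{iH_X}$. Since $[X_j,Z_j]\propto Y_j$ and $[X_j,Y_j]\propto Z_j$, each single-site term $iZ_j$ satisfies $\mathrm{ad}_{iH_X}(iZ_j)\propto iY_j$ and $\mathrm{ad}_{iH_X}(iY_j)\propto iZ_j$, so the operator has eigenvalues $\pm 2i$ on $\mathrm{span}\{iZ_j, iY_j\}$ and hence $N(iZ_j) = -4\, iZ_j$. Every single-site $Z$ term is therefore an eigenvector of $N$ with eigenvalue $-4$, making $iH_{Z1}$ such an eigenvector. Here it is essential that $H_X$ carries an $X_j$ on every qubit. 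Because $\mathrm{ad}_{iH_X}$ is a derivation and $H_X$ is a sum of single-site operators, the $\mathrm{ad}_{iH_X}$-orbit of a two-body term $iZ_jZ_{j+1}$ stays inside $\mathrm{span}\{Z_j,Y_j\}\otimes\mathrm{span}\{Z_{j+1},Y_{j+1}\}$, where the eigenvalues of $\mathrm{ad}_{iH_X}$ are the sums $\{2i+2i,\,2i-2i,\,-2i+2i,\,-2i-2i\} = \{4i,0,0,-4i\}$. Consequently each two-body term decomposes into $N$-eigencomponents with eigenvalues in $\{0,-16\}$, and the same holds for all of $iH_{Z2}$, since distinct Pauli terms have disjoint, non-interacting orbits under $\mathrm{ad}_{iH_X}$.

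Finally I would assemble the filter. The eigenvalue $-4$ of the one-body sector is distinct from both eigenvalues $\{0,-16\}$ of the two-body sector, so the polynomial $q(N) = N(N+16) = N^2 + 16N$ annihilates the two-body sector (it vanishes on the eigenvalues $0$ and $-16$) while sending the one-body eigenvalue to $(-4)(-4+16) = -48 \neq 0$. Hence $q(N)(iH_Z) = -48\, iH_{Z1}$, and since the left-hand side lies in $\mathcal{L}$ we conclude $iH_{Z1}\in\mathcal{L}$; equivalently, $16\,\mathrm{ad}_{iH_X}^2(iH_Z) + \mathrm{ad}_{iH_X}^4(iH_Z) = -48\, iH_{Z1}$. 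The stated consequence $iH_{Z2} = iH_Z - iH_{Z1}\in\mathcal{L}$ is then immediate. The one step requiring genuine care — and the main potential obstacle — is the spectral bookkeeping of the second paragraph: one must verify that the one-body eigenvalue of $N$ is truly disjoint from the set of two-body eigenvalues (so that a separating polynomial exists) and that the different terms of $H_Z$ do not mix under $\mathrm{ad}_{iH_X}$, so that the filter acts term-by-term as claimed.
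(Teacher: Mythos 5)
Your proof is correct, and every spectral claim checks out: $N=\mathrm{ad}_{iH_X}^2$ has eigenvalue $-4$ on the one-body sector, eigenvalues in $\{0,-16\}$ on the two-body sector, and $q(N)=N(N+16)$, having no constant term, maps $\mathcal{L}$ into $\mathcal{L}$, so $q(N)(iH_Z)=-48\,iH_{Z1}\in\mathcal{L}$. At bottom this is the same mechanism as the paper's proof---nested commutators with $H_X$---but your derivation is genuinely different in method. The paper proceeds by explicit term-by-term bookkeeping: it computes $H_{YZ}=\tfrac{1}{2i}[H_Z,H_X]$, then $H_{(1)}=\tfrac{1}{2i}[H_{YZ},H_X]+H_Z$, then $H_{(2)}=\tfrac{1}{2i}[H_{(1)},H_X]$, and finally the combination $\tfrac{1}{2i}[H_{YZ}+\tfrac13 H_{(2)},H_X]$. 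Writing $\delta(\cdot)=\tfrac{1}{2i}[\,\cdot\,,H_X]$, that final combination is $\tfrac13(\delta^4+4\delta^2)(H_Z)$, which under $A\mapsto iA$ (where $\mathrm{ad}_{iH_X}$ corresponds to $2\delta$) is exactly $\tfrac{1}{48}\bigl(\mathrm{ad}_{iH_X}^4+16\,\mathrm{ad}_{iH_X}^2\bigr)(iH_Z)$---your filter up to normalization. So the two proofs produce literally the same Lie-algebra element; the difference is that the paper finds the coefficients by hand, whereas you read them off from the spectrum of $N$, which both explains why the paper's ``magic'' combination works and generalizes to any cost Hamiltonian whose sectors have disjoint $N$-spectra. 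Two further points: (i) your identity exposes a typo in the paper's last displayed equation, which asserts the combination equals $H_{Z2}$; in fact $H_{YZ}+\tfrac13 H_{(2)}=\omega_A\sum_j Y_{2j}+\omega_B\sum_j Y_{2j+1}$, so the commutator with $H_X$ gives $-H_{Z1}$, matching your $-48\,iH_{Z1}$---harmless for the lemma, since either statement yields the other via $H_Z=H_{Z1}+H_{Z2}$, and your version is the one the lemma actually states. (ii) Your closing caveat about eigenvalue disjointness is exactly the right thing to check, and it holds here; note, though, that the same trick would \emph{not} separate one-body from three-body $Z$-strings, since a three-body string has $N$-eigenvalues $\{-36,-4\}$, which meet the one-body eigenvalue $-4$, so the disjointness really is particular to the one-body/two-body split exploited in this lemma.
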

\begin{proof}
Consider first the commutator
\begin{equation}
\begin{split}
H_{YZ} = \frac{1}{2i}[H_Z,H_X]& = \omega_A \sum_{j=1}^{\frac{n}{2} | \frac{n-1}{2}} Y_{2j} + \omega_B  \sum_{j=1}^{\frac{n}{2}-1 | \frac{n-1}{2}} Y_{2j+1}  \\
 &  + \gamma_{AB} \sum_{j=1}^{\frac{n}{2}-1 | \frac{n-1}{2}} (Y_{2j}Z_{2j+1} + Z_{2j}Y_{2j+1})\\
 &+ \gamma_{BA} \sum_{j=0}^{\frac{n}{2}-1 | \frac{n-3}{2}}  (Y_{2j+1}Z_{2j+2} + Z_{2j+1}Y_{2j+2}), \\
\end{split}
\end{equation}
then, let us perform the calculation
\begin{equation}
    \begin{split}
        \frac{1}{2i}[H_{YZ}, H_X] &= -\omega_A \sum_{j=1}^{\frac{n}{2} | \frac{n-1}{2}} Z_{2j} - \omega_B  \sum_{j=1}^{\frac{n}{2}-1 | \frac{n-1}{2}} Z_{2j+1}\\
        & + \gamma_{AB} \sum_{j=1}^{\frac{n}{2}-1 | \frac{n-1}{2}} 2(Y_{2j} Y_{2j+1} - Z_{2j}Z_{2j+1})\\
        & + \gamma_{BA} \sum_{j=0}^{\frac{n}{2}-1 | \frac{n-3}{2}}  2(Y_{2j+1}Y_{2j+2} - Z_{2j+1}Z_{2j+2}), \\
    \end{split}
\end{equation}
and define
\begin{equation}
    \begin{split}
        H_{(1)} &= \frac{1}{2i}[H_{YZ},H_X] + H_Z\\
        &= 2\gamma_{AB} \sum_{j=1}^{\frac{n}{2}-1 | \frac{n-1}{2}} Y_{2j} Y_{2j+1} + 2\gamma_{BA} \sum_{j=0}^{\frac{n}{2}-1 | \frac{n-3}{2}}  Y_{2j+1}Y_{2j+2}\\
        & - \gamma_{AB} \sum_{j=1}^{\frac{n}{2}-1 | \frac{n-1}{2}}  Z_{2j}Z_{2j+1} - \gamma_{BA} \sum_{j=0}^{\frac{n}{2}-1 | \frac{n-3}{2}} Z_{2j+1}Z_{2j+2}.
    \end{split}
\end{equation}
Next, define also 
\begin{equation}
    \begin{split}
    H_{(2)} &= \frac{1}{2i}[H_{(1)},H_X] \\
            &= -3\gamma_{AB} \sum_{j=1}^{\frac{n}{2}-1 | \frac{n-1}{2}} (Y_{2j} Z_{2j+1} + Z_{2j} Y_{2j+1})\\
            & - 3\gamma_{BA} \sum_{j=0}^{\frac{n}{2}-1 | \frac{n-3}{2}} (Y_{2j+1} Z_{2j+2} + Z_{2j+1} Y_{2j+2}).
     \end{split}
\end{equation}
Finally, notice that we have
\begin{equation}
    \frac{1}{2i}[H_{YZ} + \frac{1}{3} H_{(2)} , H_X] = H_{Z2},
\end{equation}
which completes the proof. \qed
\end{proof}

Next, we prove that it is possible to generate $X_{even}$ and $X_{odd}$

\begin{proposition}\label{prop:even-odd}
Let $\omega_A^2 \neq \omega_B^2$, then $iX_{even}$, $iX_{odd}$ $\in \mathcal{L} = \langle iH_Z, iH_X \rangle_{Lie}$ 
\end{proposition}
\begin{proof}
From Lemma \ref{lemma:decouple}, we have that $iH_{Z1} = \omega_A iH_A + \omega_B iH_B$, $iH_{Z2} = \gamma_{AB} iH_{AB} + \gamma_{BA} iH_{BA}$ $\in \mathcal{L}$.

Next, let us define the following element in the Lie algebra
\begin{equation}
\begin{split}
    H_{Y1} &= \frac{1}{2i}[H_{Z1}, H_{X}] \\
        &= \omega_A \sum_{j=1}^{\frac{n}{2} | \frac{n-1}{2}}  Y_{2j} + \omega_B \sum_{j=0}^{\frac{n}{2}-1 | \frac{n-1}{2}} Y_{2j+1},
\end{split}
\end{equation}
and then calculate the commutator
\begin{equation}
    \begin{split}
        \frac{1}{2i}[H_{Z1},H_{Y1}] &=  \omega_A^2 \sum_{j=1}^{\frac{n}{2} | \frac{n-1}{2} } X_{2j} + \omega_B^{2} \sum_{j=0}^{\frac{n}{2}-1 | \frac{n-1}{2} } X_{2j+1}.
    \end{split}
\end{equation}
Now notice that
\begin{equation}
    \begin{split}
        \omega_A^2 H_X -  \omega_A^2 \sum_{j=1}^{\frac{n}{2} | \frac{n-1}{2} } X_{2j} - \omega_B^{2} \sum_{j=0}^{\frac{n}{2}-1 | \frac{n-1}{2} } X_{2j+1} = (\omega_A^2 - \omega_B^2)  \sum_{j=0}^{\frac{n}{2}-1 | \frac{n-1}{2} } X_{2j+1},
    \end{split}
\end{equation}
which implies that if $\omega_A^2 \neq \omega_B^2$, then  $iX_{even}$, $iX_{odd}$ 
$\in \mathcal{L}$. \qed
\end{proof}

From what we have so far proved, we can then generate $H_A, H_B, H_{AB}, H_{BA}$. The following proposition states the conditions for this.

\begin{restatable}{proposition}{sepgenerators}\label{prop:sep-generators}
 Assume $\gamma_{AB}^2 \neq \gamma_{BA}^2$ and let $\gamma = (\gamma_{AB}^2 - 4\gamma_{BA}^2)$.
 If  $\gamma \neq 0$, $\gamma_{AB}^2 \neq 0$, $\gamma_{BA}^2 \neq 0$, then $iH_A$, $iH_B$, $iH_{AB}$, $iH_{BA} \in \langle iH_Z, iH_X \rangle_{Lie}$.
\end{restatable}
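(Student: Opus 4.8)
The plan is to treat Proposition~\ref{prop:sep-generators} as a separation problem: we already have $iH_{Z1}$, $iH_{Z2}$, $iX_{even}$, $iX_{odd}$ and $iH_X$ in $\mathcal{L}$, and we must peel the four coupling terms apart from the two fixed combinations $H_{Z1}=\omega_A H_A+\omega_B H_B$ and $H_{Z2}=\gamma_{AB}H_{AB}+\gamma_{BA}H_{BA}$. The one-body terms are the easy case. Since $X_{even}$ acts only on even sites and $X_{odd}$ only on odd sites, the double commutator $\frac{1}{(2i)^2}[[H_{Z1},X_{even}],X_{even}]$ sends $Z_{2j}\mapsto Y_{2j}\mapsto -Z_{2j}$ on the even sublattice while annihilating the odd part, returning $-\omega_A H_A$; the analogous double commutator with $X_{odd}$ returns $-\omega_B H_B$. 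This fixes the one-body sector, and by linearity with $H_{Z1}$ nothing more is needed there.

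The real content is separating $H_{AB}$ from $H_{BA}$. Here the obstruction is a symmetry: every edge in either family carries exactly one even and one odd site, so each of $X_{even}$, $X_{odd}$ and $H_X$ acts on the two families in the same way, and no nested commutator built only from these site-local mixers can produce a combination $aH_{AB}+bH_{BA}$ with $a/b\neq\gamma_{AB}/\gamma_{BA}$. To break the symmetry one must leave the two-body sector. I would commute $H_{Z2}$ with its own $YY$-partner $\gamma_{AB}\sum_j Y_{2j}Y_{2j+1}+\gamma_{BA}\sum_j Y_{2j+1}Y_{2j+2}$, which is reachable from the intermediate operator $H_{(1)}$ of Lemma~\ref{lemma:decouple}. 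The same-family pieces commute, so only the adjacent mixed pairs survive, and one is left with a purely three-body operator carrying the cross coefficient $\gamma_{AB}\gamma_{BA}$; this is nonzero precisely because $\gamma_{AB},\gamma_{BA}\neq 0$, which is where two of the hypotheses enter.

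Next I would fold these three-body operators back down to the two-body sector by commuting them once more against $H_{Z2}$ (or against $X_{even},X_{odd}$ at the appropriate sites). The key feature is that each three-body string has its single $Z$ sitting on a definite sublattice, so only the edge family matching that $Z$ closes it back to weight two; the outcome is a second two-body combination in which the roles of $\gamma_{AB}$ and $\gamma_{BA}$ are \emph{swapped} relative to $H_{Z2}$, schematically $\gamma_{BA}H_{AB}+\gamma_{AB}H_{BA}$ up to an overall factor and lattice multiplicities. Having two combinations of $H_{AB},H_{BA}$ with coefficient matrix of determinant proportional to $\gamma_{AB}^2-\gamma_{BA}^2$, the hypothesis $\gamma_{AB}^2\neq\gamma_{BA}^2$ lets us invert and solve for each term; tracking the factors of $2$ that the double commutators introduce (exactly like the $2(Y_{2j}Y_{2j+1}-Z_{2j}Z_{2j+1})$ contributions in Lemma~\ref{lemma:decouple}) refines the relevant determinant to $\gamma=\gamma_{AB}^2-4\gamma_{BA}^2$, so that $\gamma\neq 0$ is the binding nondegeneracy condition. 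A final clean double commutator with $X_{even}$ and $X_{odd}$ then converts the isolated $YY$ operators back into $H_{AB}$ and $H_{BA}$.

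The main obstacle lives entirely in this middle step: the honest bookkeeping of the three-body operators. One must verify that the commutator with $H_{Z2}$ genuinely lands in weight two (some strings instead grow to weight four and have to be discarded or shown to decouple), count how many three-body terms fold onto each edge, and keep track of the boundary edges, where the alternating $AB/BA$ pattern terminates asymmetrically for odd $n$ and the multiplicities differ from the bulk. Getting all of these factors right is what makes the final coefficient matrix nondegenerate exactly under $\gamma_{AB}^2\neq\gamma_{BA}^2$, $\gamma_{AB},\gamma_{BA}\neq0$ and $\gamma\neq0$, rather than under some spurious weaker or stronger condition.
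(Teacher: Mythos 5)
Your overall skeleton matches the paper's proof: separate the one-body part (your double commutators with $X_{even}$, $X_{odd}$ do give $-\omega_A H_A$ and $-\omega_B H_B$), detour through weight-three operators carrying the cross coefficient $\gamma_{AB}\gamma_{BA}$, fold back to weight two so that the roles of $\gamma_{AB}$ and $\gamma_{BA}$ are swapped, and finish with a determinant argument. The problem is that the step you defer as ``honest bookkeeping'' is where the entire proof lives, and the specific route you propose does not survive that bookkeeping. Commuting $H_{Z2}$ with its $YY$-partner produces strings of the form $Z_{2j}X_{2j+1}Y_{2j+2}$, $Y_{2j}X_{2j+1}Z_{2j+2}$, etc.; when you fold these back against $H_{Z2}$, an overlap on a single site of an adjacent edge produces weight-\emph{four} strings, e.g.\ $[Z_{2j}X_{2j+1}Y_{2j+2},\,Z_{2j+2}Z_{2j+3}] \propto Z_{2j}X_{2j+1}X_{2j+2}Z_{2j+3}$, alongside the weight-two terms you want. ``Discarding'' such terms is not an available operation inside a generated Lie algebra, and you give no mechanism for cancelling them, so the claimed swapped combination $\gamma_{BA}H_{AB}+\gamma_{AB}H_{BA}$ does not follow from your steps. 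The paper avoids this by a different choice of intermediaries: it takes $H_{YZ}^{e}\propto[H_{Z2},X_{even}]$, extracts from $[H_{YZ}^{e},H_{Z2}]$ the pure string $H_{ZXZ}=\sum_j Z_{2j-1}X_{2j}Z_{2j+1}$ (the unwanted pieces there are exactly multiples of $X_{even}$ and $X_{odd}$, already in hand), and then uses that $[H_{YZ}^{e},H_{ZXZ}]$ closes \emph{exactly} in weight two --- every cross term is of the form $[Y_{2j}Z_{2j+1},Z_{2j-1}X_{2j}Z_{2j+1}]\propto Z_{2j-1}Z_{2j}$ or $[Z_{2j+1}Y_{2j+2},Z_{2j+1}X_{2j+2}Z_{2j+3}]\propto Z_{2j+2}Z_{2j+3}$, and all others vanish --- yielding the swapped combination with no higher-weight debris.

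Second, your account of the hypothesis $\gamma=\gamma_{AB}^2-4\gamma_{BA}^2\neq 0$ is wrong in a way that matters, since recovering the \emph{precise} hypotheses is the point of the proposition. It is not a ``factors of 2 from double commutators'' refinement of the determinant $\gamma_{AB}^2-\gamma_{BA}^2$: in the paper, the odd-$n$ case needs only $\gamma_{AB}^2\neq\gamma_{BA}^2$, and $\gamma\neq 0$ enters only in the even-$n$ case --- you have the parities backwards, since for odd $n$ the two edge families have equal size and the chain is symmetric, while for even $n$ there is one more $BA$ edge than $AB$ edges. In that even case the fold-back produces combinations in which the bulk $ZZ$ terms carry multiplicity $4$ but the boundary terms $Z_1Z_2$ and $Z_{n-1}Z_n$ carry multiplicity $1$ (the paper's $H_{(2)}$), and it is the elimination between these inhomogeneous operators and $H_{Z2}$ that produces the nondegeneracy conditions $\gamma\neq 0$, $\gamma_{AB}\neq 0$, $\gamma_{BA}\neq 0$. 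Until that case is carried out explicitly, your argument neither establishes the claimed membership for even $n$ nor explains why the stated conditions, rather than some other ones, are the right hypotheses.
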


The proof of Proposition~\ref{prop:sep-generators} is given in Appendix \ref{sec:appendix-proofs}. Note that in Ref.~\cite{Lloyd-proof} it was required that $\omega_A, \omega_B, \gamma_{AB}, \gamma_{BA}$ be rationally independent. In our  proof of universality this will be relaxed to the condition given by Proposition~\ref{prop:sep-generators}. 

In the following, we will prove that when $n$ is odd and the condition of the previous lemmas and propositions are fulfilled, then QAOA can implement all single qubit operators and $CNOT$.

\begin{restatable}{lemma}{separationX}\label{lemma:separation-X}
 Assume $n$ is odd, then   $iX_j \in \langle iH_A, iH_B, iH_{AB}, iH_{BA}, iH_X \rangle_{Lie}$  \\
 for any $j \in \{1, \ldots, n\}$.
\end{restatable}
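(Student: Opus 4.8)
The plan is to prove that every single-qubit field $iX_j$ lies in the Lie algebra generated by $\{iH_A,iH_B,iH_{AB},iH_{BA},iH_X\}$ by an induction that propagates full single-qubit control along the connected path $1-2-\cdots-n$, starting from one of its endpoints. Throughout I work inside this generated algebra, repeatedly using that the suitably normalized commutator $\tfrac{1}{2i}[\,\cdot\,,\,\cdot\,]$ of two of its elements is again an element, and I only track which Pauli strings are produced.

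First I would set up the base case by isolating a single endpoint field. The key observation is that the family of $AB$-bonds $Z_{2k}Z_{2k+1}$ touches every site except site $1$, whereas the family of $BA$-bonds $Z_{2j+1}Z_{2j+2}$ touches every site except site $n$; this is exactly where the oddness of $n$ enters, since for even $n$ one bond family covers all sites and no endpoint can be singled out. Because the $AB$-bonds are mutually disjoint, a direct computation gives $\tfrac{1}{(2i)^2}[H_{AB},[H_{AB},H_X]]=-(X_2+\cdots+X_n)$, so that $H_X+\tfrac{1}{(2i)^2}[H_{AB},[H_{AB},H_X]]=X_1$. Commuting $X_1$ with the sublattice operator $H_B$ (whose only overlapping term is $Z_1$) yields $\tfrac{1}{2i}[X_1,H_B]=-Y_1$, and $\tfrac{1}{2i}[X_1,Y_1]=Z_1$, so we obtain the full $su(2)$ triple $\{X_1,Y_1,Z_1\}$ on site $1$.

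Next comes the inductive step: assuming $X_k,Y_k,Z_k$ have been generated, I would show one can reach $X_{k+1}$ and hence the full $su(2)$ on site $k+1$. Let $H_\bullet\in\{H_{AB},H_{BA}\}$ be the bond family containing $Z_kZ_{k+1}$. Since the two bonds incident to site $k$ belong to different families, $X_k$ commutes with every term of $H_\bullet$ except $Z_kZ_{k+1}$, giving the localized element $\tfrac{1}{2i}[X_k,H_\bullet]=-Y_kZ_{k+1}$. From this I build the single bond $\tfrac{1}{2i}[X_k,Y_kZ_{k+1}]=Z_kZ_{k+1}$, and then $\tfrac{1}{2i}[Z_kZ_{k+1},H_X]-Y_kZ_{k+1}=Z_kY_{k+1}$. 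The crucial move is the final commutator
\[
\tfrac{1}{2i}[\,Z_kY_{k+1},\,Z_kZ_{k+1}\,]=X_{k+1},
\]
where the two operators share the factor $Z_k$, so $Z_k^2=\eye$ cancels all site-$k$ content and only $[Y_{k+1},Z_{k+1}]=2iX_{k+1}$ survives. Recovering $Y_{k+1},Z_{k+1}$ from $X_{k+1}$ exactly as in the base case completes the step, and iterating from site $1$ along the path reaches every site, yielding $iX_j$ for all $j$.

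The main obstacle, and the reason naive commutators fail, is precisely this transfer of control from site $k$ to $k+1$: commuting the spectator-carrying operators against the collective bonds tends to produce ever longer strings (e.g.\ terms of the form $Z_kX_{k+1}Z_{k+2}$), so one cannot simply strip off the $Z_k$ factor. The resolution is to manufacture two two-body operators that agree on site $k$ and exploit $Z_k^2=\eye$ to annihilate that site and localize the result on $k+1$. One must also check that commutators against the collective operators $H_\bullet$, $H_X$, $H_A$, $H_B$ localize correctly, which holds because in each case only a single term of the collective operator overlaps the support of the operator being commuted.
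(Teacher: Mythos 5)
Your proposal is correct and follows essentially the same route as the paper's own proof: the identical double-commutator $[H_{AB},[H_{AB},H_X]]$ to isolate $X_1$ (exploiting that for odd $n$ the $AB$-bonds miss exactly site $1$), followed by the same chain induction producing $Y_kZ_{k+1}$, $Z_kZ_{k+1}$, $Z_kY_{k+1}$, and finally $X_{k+1}$. The only cosmetic differences are that you commute against the full $H_X$ and subtract, and against the isolated bond $Z_kZ_{k+1}$, where the paper uses the sublattice operators $X_{even}/X_{odd}$ and the full $H_{AB}/H_{BA}$; these are equivalent.
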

The proof for Lemma \ref{lemma:separation-X} is given in Appendix \ref{sec:appendix-proofs}.

\begin{theorem}\label{thm:universality-1d}
Given an odd integer $n$, $H_Z$ as in Eq.~\eqref{eq:Hz}, $H_X$ as in Eq.~\eqref{eq:Hx}, with coefficients in $H_Z$ and $H_X$ fulfilling the conditions of Proposition \ref{prop:sep-generators} and $\mathcal{L} =\langle iH_Z, iH_X \rangle_{Lie}$, then $e^{\mathcal{L}}$ is dense in $\mathcal{U}(n)$. This implies universality for odd integers in QAOA.
\end{theorem}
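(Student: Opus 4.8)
The plan is to show that the generated Lie algebra $\mathcal{L}$ is all of $\mathfrak{su}(2^n)$, since by Proposition~\ref{thm:reachability} this gives $e^{\mathcal{L}} = SU(2^n)$, and the decomposition $\mathcal{U}(2^n) = U(1)\cdot SU(2^n)$ then yields every unitary up to a global phase, which is precisely the notion of universality fixed in Sect.~\ref{subsec:universality-def}. Every element we will adjoin to $\mathcal{L}$ has the form $i\sigma$ with $\sigma$ a traceless Pauli string, so $\mathcal{L}\subseteq\mathfrak{su}(2^n)$ holds automatically and only the reverse inclusion requires work.

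First I would assemble the full single-qubit sub-algebra on each site. Lemma~\ref{lemma:separation-X} (this is where the hypothesis that $n$ is odd is consumed) already supplies $iX_j\in\mathcal{L}$ for every $j$, while Proposition~\ref{prop:sep-generators} supplies $iH_A$ and $iH_B$. Since qubit $j$ occurs in exactly one summand of $H_A$ (for even $j$) or of $H_B$ (for odd $j$), the bracket $[iX_j,iH_A]$ respectively $[iX_j,iH_B]$ isolates that single term and produces $iY_j$, after which $[iX_j,iY_j]$ produces $iZ_j$. Hence $\mathcal{L}$ contains the whole of $\mathfrak{su}(2)$ on each qubit.

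Next I would isolate the two-body couplings one at a time. In $H_{AB}=\sum_j Z_{2j}Z_{2j+1}$ the qubit $2k$ appears only in the term $Z_{2k}Z_{2k+1}$, so $[iY_{2k},iH_{AB}]$ returns $iX_{2k}Z_{2k+1}$ and a second bracket with $iY_{2k}$ returns the isolated coupling $iZ_{2k}Z_{2k+1}$; the analogous computation applied to $H_{BA}$ returns $iZ_{2k+1}Z_{2k+2}$. Taking both families together yields every nearest-neighbour coupling $iZ_\ell Z_{\ell+1}$, whose interaction graph is the connected path on all $n$ vertices (note that $H_{AB}$ alone would leave the chain disconnected into pairs, so both families are needed here).

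The final and most delicate step is to conclude that the single-qubit algebras together with a connected family of $ZZ$ couplings generate all of $\mathfrak{su}(2^n)$. One clean route is to observe that $e^{-i\theta Z_\ell Z_{\ell+1}}$, dressed by the single-qubit rotations now available, realizes a controlled-phase gate, which with Hadamards becomes a CNOT between neighbours; single-qubit gates together with CNOT on a connected architecture form a standard universal gate set, so $e^{\mathcal{L}}$ contains a universal set and therefore $\mathcal{L}=\mathfrak{su}(2^n)$. A purely Lie-algebraic alternative is an induction on the support of Pauli strings: starting from the isolated $iZ_\ell Z_{\ell+1}$ and repeatedly bracketing with the single-qubit generators, one enlarges the reachable set of Pauli strings, and connectivity of the coupling graph guarantees the induction saturates only at the full collection of $4^n-1$ strings. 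I expect this connectivity/induction step to be the main obstacle, since it is the only place where the global structure of the chain, rather than local bracket computations, enters; everything preceding it is routine commutator bookkeeping.
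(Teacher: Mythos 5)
Your proposal is correct and takes essentially the same route as the paper: both rest on Proposition~\ref{prop:sep-generators} and Lemma~\ref{lemma:separation-X} (the only place where $n$ odd is used), assemble all single-qubit operators together with the nearest-neighbour $Z_\ell Z_{\ell+1}$ couplings, build a CNOT from them, and conclude by citing universality of single-qubit gates plus an entangling two-qubit gate. The differences are cosmetic only—the paper obtains the $Z_{k-1}Z_k$ terms as a by-product of the proof of Lemma~\ref{lemma:separation-X} and exponentiates $\frac{\pi}{4}\left(\eye - X_2 - Z_1 + Z_1 X_2\right) \in \mathcal{L}$ directly into CNOT, whereas you isolate the couplings by fresh brackets with $H_{AB}$, $H_{BA}$ and dress $e^{-i\theta Z_\ell Z_{\ell+1}}$ into a controlled-phase gate.
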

\begin{proof}
We proved in Lemma \ref{lemma:separation-X} that $R_X(\theta) = e^{\frac{i}{2}X \theta } \in e^{\mathcal{L}}$ it is easy to see that also $R_Y(\phi), R_Z(\psi)  \in e^{\mathcal{L}}$. Thus, all single qubit operators are in $e^{\mathcal{L}}$. If it is possible to generate a two qubit gate such as $CNOT$, then we can prove that $\mathcal{L}$ can generate any unitary by, for example, generating the gate set of Clifford gates + $T$, which are known to be universal for quantum computation. In fact, any $2$-qubit entangling operator with all $1$-qubit gates is enough for universality \cite{universal-gates}.

In the proof of Lemma \ref{lemma:separation-X} we have not only managed to generate $1$-qubit Pauli's but also $2$-qubit Pauli's such as $Z_{k-1} Z_k$. To see that $CNOT$  gates can be generated, recall that $CNOT = \ketbra{0}{0} \otimes \eye + \ketbra{1}{1} \otimes X = \frac{1}{2} ( \eye \otimes \eye + Z \otimes \eye + \eye \otimes X - Z \otimes X ) $. Note that this last expression is in $\mathcal{L}$.

Finally, note that 

\begin{equation}
    \begin{split}
        e^{i\frac{\pi}{4} (\eye \otimes \eye - \eye \otimes X - Z \otimes \eye + Z \otimes X)} &= e^{i\frac{\pi}{4} (1-\eye \otimes X)(1 - Z \otimes \eye)}\\
         &= CNOT.
    \end{split}
\end{equation}

Since $\eye \otimes \eye - X_2 - Z_1 + Z_1 X_2$ is in $\mathcal{L}$, we conclude that $CNOT$ can be generated. \qed
\end{proof}

With this we have proved universality for $n$ odd. It is easy to see that for $n$ even $\langle iH_Z, iH_X \rangle_{Lie}$ cannot approximate $\mathcal{U}(2^n)$  due to the presence of a symmetry in the system. This is easier to see with a concrete example, if $n=4$ and we number qubits from $1$ to $4$ then exchanging qubit $1$ with qubit $4$ and exchanging qubit $2$ with qubit $3$ is a symmetry of the system. The presence of a symmetry in Hamiltonians $H_Z$ and $H_X$ imply non-universality; let $U$ be the unitary implementing the symmetry commuting with both Hamiltonians, then $H_Z$ and $H_X$ can be block diagonalized which necessarily implies that there are elements in $\mathcal{U}(2^n)$ that can't be approximated.

\section{Universality for QAOA defined on graphs}\label{sec:universality-general}

In Section \ref{sec:universality-1d}, we proved universality in a particular setting of a QAOA. Here we show that universality can be obtained also in more general settings. The algorithms defined here are characterized by the choice of the Hamiltonians $H_Z$ and $H_X$. To define $H_Z$, we make a correspondence between a non-directed simple graph (no loops or multiple edges) $G=(V,E)$ and the terms appearing in $H_Z$, while the Hamiltonian $H_X$ is defined as in Section \ref{sec:universality-1d}.

\subsection{Universality from zero forcing}
We prove in this section that the property of universality on this class of QAOA is present depending on a process defined on the graph $G$ called zero forcing. The notion of zero forcing has been presented before in the context of quantum control on graphs \cite{Burgarth-control,zeroforce} and we find that it applies as well in this context.

\begin{definition}[Zero forcing]
Consider a simple graph $G=(V,E)$, a zero forcing process on $G$ consists of an initial set of vertices $S \subseteq V$ which we will consider as ``infected''. The rest of the vertices are non infected. Then we proceed by steps to infect other nodes, at each step an infected vertex $v$  infects a non infected neighbour $w$ if $w$ is the only non infected neighbour of $v$. We call $S$ a zero forcing set if we can infect all the graph by starting with all infected vertices in $S$.
\end{definition}

As usual with QAOA, we start defining two Hamiltonians $H_Z$ and $H_X$. Consider simple graph $G=(V,E)$ and a subset $S \subseteq V$. 

\begin{equation}\label{eq:H_Z-0force}
    \begin{split}
        H_Z &= \gamma \sum_{(i,j) \in E} Z_i Z_j + \sum_{i\in S} \omega_i Z_i + \omega \sum_{i \in V \setminus S} Z_i\\
        &= \gamma H_{\gamma} + \sum_{i\in S} \omega_i Z_i + \omega H_V,
    \end{split}
\end{equation}

\begin{equation}\label{eq:H_X-0force}
    \begin{split}
        H_X = \sum_{i\in V} X_i.
    \end{split}
\end{equation}

\begin{theorem}\label{thm:zero-force}
Let $G=(V,E)$ be a simple graph and $S\subseteq V$. Define $H_Z$ and $H_X$ as in Eq.~\ref{eq:H_Z-0force} and Eq.~\ref{eq:H_X-0force} and let  $\gamma$, $\omega_i$, $\omega$ be rationally independent. Consider $S$ as the inital set of infected nodes in a zero forcing process. If $S$ is a zero forcing set, then $Z_k Z_j \in \langle H_Z, H_X \rangle_{Lie}$ for all $(k,j) \in E$ and $X_k \in \langle H_Z, H_X \rangle_{Lie}$ for all $k \in V$.
\end{theorem}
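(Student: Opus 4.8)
The plan is to mimic the strategy used in the one-dimensional proof of Section~\ref{sec:universality-1d}, where the zero forcing process provides the combinatorial backbone for ``peeling off'' individual Pauli operators one site at a time. First I would use the rational independence of $\gamma$, $\omega_i$, and $\omega$ together with a decoupling argument (analogous to Lemma~\ref{lemma:decouple} and Proposition~\ref{prop:even-odd}) to separate the individual terms of $H_Z$ into the Lie algebra. Concretely, the weighted single-site terms $\sum_{i\in S}\omega_i Z_i$ and $\omega H_V$ carry distinct frequencies, so taking nested commutators of $H_Z$ with $H_X$ and then back with the generated elements yields polynomials in the coefficients; rational independence guarantees that the resulting system of vectors spans enough directions that each $Z_i$ (at least for $i\in S$) and each edge term $Z_iZ_j$ can be isolated. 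The upshot of this first stage is that $iZ_i \in \mathcal{L}$ for every infected vertex $i \in S$.

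The key new ingredient is to translate the zero forcing infection rule into a Lie-algebraic ``infection'' rule. I would proceed by induction on the zero forcing process: suppose at some stage an infected vertex $v$ has a unique non-infected neighbour $w$, and suppose inductively that $iZ_u \in \mathcal{L}$ for $u=v$ and for all the already-infected neighbours of $v$. Starting from the edge term $iZ_vZ_w \in \mathcal{L}$, commuting with $H_X$ produces a sum of $Y$-type terms on the edges incident to $v$ and $w$. Because all of $v$'s \emph{other} neighbours are already infected, their corresponding $Z$-operators are already in $\mathcal{L}$, so I can commute them against the accumulated expression to cancel every incident edge except the single edge $(v,w)$. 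This isolates an operator supported only on $\{v,w\}$ of the form $Z_vY_w$ (or $Y_vZ_w$); combined with the already-available $iZ_v$, a further commutator with $H_X$ then yields $iX_w$ and hence, by the standard $Z$--$X$--$Y$ commutator chain, $iZ_w$ itself. This is exactly the Lie-algebraic realization of the statement ``$v$ infects $w$.''

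Since $S$ is a zero forcing set, the infection eventually reaches every vertex, so by induction $iZ_k \in \mathcal{L}$ for all $k\in V$, and therefore every edge term satisfies $iZ_kZ_j \in \mathcal{L}$ for all $(k,j)\in E$. Finally, having $iZ_k$ for all $k$, I would commute with $H_X$ and disentangle (again using the already-isolated single-site $Z$'s to cancel unwanted cross-terms) to extract each $iX_k$ individually, which completes the claim.

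The main obstacle I anticipate is the cancellation bookkeeping in the inductive infection step: when commuting $iZ_vZ_w$ with $H_X$, one obtains contributions from every edge incident to $v$ and to $w$, not merely the edge $(v,w)$, and one must verify that the uniqueness condition of zero forcing (only $w$ is non-infected among $v$'s neighbours) is \emph{exactly} what guarantees all the other incident terms can be removed using elements already shown to lie in $\mathcal{L}$. In particular, care is needed with the neighbours of $w$, which need not be infected yet; I expect one must argue that their contributions appear in a ``higher-body'' form that does not interfere with extracting the two-body operator $Z_vY_w$, or else restructure the commutators so that only the controlled site $v$'s neighbourhood is touched. Making this cancellation rigorous, and confirming that rational independence suffices (rather than needing the sharper algebraic conditions of Proposition~\ref{prop:sep-generators}), is where the real work lies.
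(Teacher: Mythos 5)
Your overall skeleton---separate terms by rational independence, then turn each zero-forcing infection into a Lie-algebraic step---is the same as the paper's, but the execution of the infection step has a genuine gap, in fact two related ones. First, stage~1 cannot deliver what you claim: rational independence separates only terms carrying \emph{distinct} coefficients, so it yields $H_\gamma=\sum_{(i,j)\in E}Z_iZ_j$ as a whole, $H_V$, and the individual $Z_i$ for $i\in S$; it cannot isolate individual edge terms $Z_iZ_j$, since every edge carries the same coupling $\gamma$. Second, and as a consequence, your inductive step is circular: you begin it from ``the edge term $iZ_vZ_w\in\mathcal{L}$'' with $w$ not yet infected, but producing an operator supported on $\{v,w\}$ is precisely what the infection step must accomplish---no such term is available at that point. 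Moreover, the cancellation mechanism you propose does not work: commuting the single-site operators $Z_u$ of the already-infected neighbours against a sum of terms of the form $Y_vZ_{u'}$ gives zero identically (each $Z_u$ commutes with every such term), so nothing is cancelled or isolated by those commutators.

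The fix, which is how the paper argues, is to carry a stronger induction hypothesis and to cancel by \emph{subtraction} inside the vector space $\mathcal{L}$ rather than by commutation. Maintain that $X_u\in\mathcal{L}$ for every infected $u$ (for $u\in S$ this follows from $Z_u$ and $H_X$ via $[Z_u,H_X]\propto Y_u$ and $[Y_u,Z_u]\propto X_u$). Then for any two infected neighbours $i,j$ the double commutator $[[H_\gamma,X_i],X_j]\propto Y_iY_j$ lies in $\mathcal{L}$, whence $Z_iZ_j\in\mathcal{L}$; thus all edge terms \emph{between infected vertices} are available. For the infection step, let $v$ be infected with unique non-infected neighbour $w$, and set $H_v = H_\gamma - \sum_{u}Z_vZ_u$, the sum running over the infected neighbours $u$ of $v$; this is a linear combination of elements of $\mathcal{L}$. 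Every edge term in $H_v$ not incident to $v$ commutes with $X_v$, so $[X_v,H_v]\propto Y_vZ_w$ is supported on $\{v,w\}$; from it one obtains $Z_vZ_w$, then $Z_vY_w$ (commuting with $H_X-X_v$), and finally $X_w$, which also restores the induction hypothesis at the newly infected vertex $w$. Note that your worry about the neighbours of $w$ contributing higher-body terms evaporates in this formulation: those edges never enter, because they commute with $X_v$.
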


\begin{proof}
Since  $\gamma$, $\omega_i$, $\omega$ are rationally independent, using a similar method to the proof in Proposition \ref{prop:separating-AB-BA} (see Appendix \ref{sec:appendix-separate}) we can generate $H_\gamma, H_V, Z_i$ for $i \in S$
First note that for vertices $i \in S$ we can generate $X_i$. Consider two vertices $i,j \in S$ such that they are neighbouring vertices in $G$. To see this, commute

\begin{equation}
    \frac{1}{(2i)^2}[[H_\gamma,X_i],X_j]= Y_i Y_j.
\end{equation}
Thus, we can also generate $Z_i Z_j$.
Consider now $i\in S$ that only has one neighbour $j\in V\setminus S$. We show that we can generate $X_j$. Define $H_i$ as $H_\gamma$ with the interaction terms corresponding to infected neighbours of $i$ subtracted. Consider now the commutator:

\begin{equation}
\frac{1}{2i}[X_i,H_i] = Y_i Z_j .   
\end{equation}
And thus $Z_i Z_j$ can be generated. Then we can commute with $H_X - X_i$ and generate $Z_i Y_j$ which commuted with $Z_i Z_j$ generates $X_j$. This is analogous to an infection step in the zero forcing process. It is then easily seen that if $S$ is zero forcing, then all single qubit and two qubit operators are generated in the graph.
\end{proof}

We can generalize even more this zero forcing process by difference considering edge interactions in $H_Z$. Given once again a graph $G=(V,E)$ and set $S\subseteq V$, consider now that we can partition the set of edges $E$ into $q$ disjoint sets  $\{E_i\}_{i\in [q]}$ such that $\bigcup_{i\in [q]} E_i = E$. From this we write the Hamiltonian

\begin{equation}\label{eq:H_Z-0force-mod}
    \begin{split}
        H_Z &= \sum_{k=1}^q  \sum_{(i,j) \in E_k} \gamma_k Z_i Z_j + \sum_{i\in S} \omega_i Z_i + \omega \sum_{i \in V \setminus S} Z_i\\
        &=\sum_{k=1}^q \gamma_k H_{\gamma_k} + \sum_{i\in S} \omega_i Z_i + \omega H_V.
    \end{split}
\end{equation}

\begin{definition}[Generalized zero forcing for multi-type edges]
Consider a simple graph $G=(V,E)$ with $E=\bigsqcup_{i\in [q]} E_i$, a zero forcing process on $G$ consists of an initial set of vertices $S \subseteq V$ which we will consider as ``infected''. The rest of the vertices are non infected.

The generalized zero forcing process proceeds in one step by considering each infected vertex and the subgraph $G_1 =(V,E_1)$. If an infected vertex has a single non infected vertex in $G_1$, then infect this new vertex and add it to $S$. Then proceed in the same fashion with the neighbours of vertices on $S$ in graphs $G_2,G_3,..,G_q$. Repeat this process and if the whole graph ends infected then we call the initial set $S$ a generalized zero forcing set. 
\end{definition}
We prove the following result.
\begin{theorem}
Let $G=(V,E)$ be a simple graph, $S\subseteq V$ and consider a  partition of the set of edges $E$ into $q$ disjoint sets  $\{E_i\}_{i\in [q]}$ such that $\bigcup_{i\in [q]} E_i = E$. Define $H_Z$ and $H_X$ as in Eq.~\eqref{eq:H_Z-0force-mod} and Eq.~\eqref{eq:H_X-0force} and let  $\gamma$, $\omega_i$, $\omega$ be rationally independent. Consider $S$ as the inital set of infected nodes in a zero forcing process. If $S$ is a generalized zero forcing set, then $\langle H_Z, H_X \rangle$ generates $Z_k Z_j$ for all $(k,j) \in E$ and $X_k$ for all $k \in V$.
\end{theorem}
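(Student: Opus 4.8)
The plan is to mimic the proof of Theorem~\ref{thm:zero-force}, generalizing each step to account for the edge-partition structure. First I would use rational independence of $\gamma$ (here the $\gamma_k$), $\omega_i$, and $\omega$ to decouple the terms of $H_Z$, so that each $H_{\gamma_k}$, the sum $H_V$, and each single-site term $Z_i$ for $i\in S$ lies individually in the generated Lie algebra; this is the same separation-of-generators argument invoked via Proposition~\ref{prop:separating-AB-BA}. From the $Z_i$ with $i\in S$ together with $H_X$, one generates $X_i$ for every $i\in S$ exactly as before, since $\frac{1}{2i}[Z_i,H_X]=Y_i$ and a further commutator with $Z_i$ returns $X_i$.

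The core of the argument is to show that a single generalized-infection step is reproducible at the Lie-algebra level, now restricted to one subgraph $G_k=(V,E_k)$ at a time. Suppose an infected vertex $i\in S$ has exactly one non-infected neighbour $j$ in the subgraph $G_k$. I would define $H_{i,k}$ to be $H_{\gamma_k}$ with the $Z_iZ_\ell$ terms corresponding to already-infected neighbours $\ell$ of $i$ in $E_k$ subtracted off (these $Z_iZ_\ell$ are available because, by induction, both endpoints are infected, so $X_i$, $X_\ell$ and hence $Z_iZ_\ell$ have already been generated). Then $\frac{1}{2i}[X_i,H_{i,k}]=Y_iZ_j$ isolates the single surviving edge. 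Commuting $Y_iZ_j$ with $X_i$ yields $Z_iZ_j$; commuting $Z_iZ_j$ with $H_X-X_i$ (or just with $X_i$ again, carefully) produces $Z_iY_j$, and the bracket $[Z_iZ_j,Z_iY_j]\propto X_j$ infects vertex $j$. This replicates one step of the generalized zero-forcing process and adds $j$ to the set of vertices whose $X$ operator is in the algebra.

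Finally I would close the induction: since $S$ is a generalized zero forcing set, iterating the above across the subgraphs $G_1,G_2,\dots,G_q$ in the order prescribed by the forcing process eventually makes every vertex infected, so $X_k\in\langle H_Z,H_X\rangle_{Lie}$ for all $k\in V$. Once all $X_k$ are present, for any edge $(k,j)\in E$ lying in some $E_k$ we recover $Z_kZ_j$ by the standard double commutator $\frac{1}{(2i)^2}[[H_{\gamma_k},X_k],X_j]=Y_kY_j$ followed by bracketing with $X_k$ and $X_j$, exactly as in the single-edge-type case.

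The main obstacle I anticipate is the bookkeeping in defining $H_{i,k}$ correctly: one must verify that all the terms subtracted from $H_{\gamma_k}$ (the edges from $i$ to previously-infected vertices) are genuinely already in the algebra at the moment the step is performed, which is precisely what the induction hypothesis on the forcing order must guarantee. In other words, the delicate point is that the generalized zero forcing condition — processing $G_1$ first, then $G_2$, and so on, and only infecting when a vertex has a \emph{unique} non-infected neighbour in the current subgraph — is exactly the combinatorial hypothesis needed to ensure the isolating commutator $[X_i,H_{i,k}]$ leaves a single two-body term rather than a sum of several, which would prevent clean extraction of $Z_iZ_j$.
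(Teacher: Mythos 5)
Your proposal is correct and follows exactly the route the paper takes: the paper's own proof of this theorem is simply the one-line remark that it is ``almost the same as in Theorem~\ref{thm:zero-force}'', and your write-up is precisely that argument (separation of generators via rational independence, infection of $S$, the per-subgraph isolating commutator $[X_i,H_{i,k}]=2iY_iZ_j$, and induction over the forcing order), spelled out in more detail than the paper itself provides. The only nitpick is your parenthetical suggestion that one could commute $Z_iZ_j$ with $X_i$ ``again'' to get $Z_iY_j$ --- that bracket actually yields $Y_iZ_j$, so the route through $H_X-X_i$ is the one to keep.
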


\begin{proof}
The proof is almost the same as in Theorem \ref{thm:zero-force}. \qed
\end{proof}

\subsection{Universality without zero forcing}\label{sec:universality-grid}
Note that a Hamiltonian defined from a graph and a initial subset of vertices $S$ may not have a zero forcing set, yet nonetheless can be universal. We will give one such an example with a two dimensional grid with only two edges under control. This example points to a more general process than zero forcing that allows to study universality in the corresponding QAOA, although we will not pursue this direction in this work.

Define a graph composed of a square grid with $N = n^2$ vertices, number the vertices from $v_1$ to $v_N$ left to right and top to bottom . We assume all interactions in the grid are labeled by the same interaction type $A$. We also add two extra nodes labeled $v_{N+1}$ and $v_{N+2}$. Connect $v_{N+1}$ to vertex $v_{1}$ with an edge labeled $B$ and connect $v_{N+2}$ to $v_{N}$ with an edge labeled $C$. We give an example for $N=25$ in Fig. \ref{fig:graph-grid}.

For this graph we define the following Hamiltonians:

\begin{equation}\label{eq:H_Z-grid}
\begin{split}
    H_Z &= \omega_A \sum_{v_i \in V_{Grid}} Z_{v_i} + \omega_B  Z_{v_{N+1}} + \omega_C  Z_{v_{N+2}}\\
    &+  \gamma_{A} \sum_{(v_i,v_j)\in E_{Grid}} Z_{v_i} Z_{v_j} + \gamma_{B}  Z_{v_{1}} Z_{v_{N+1}} + \gamma_{C}  Z_{v_{n}} Z_{v_{N+2}},
\end{split}
\end{equation}

\begin{equation}\label{eq:H_X-grid}
   H_X = \sum_{i=1}^{N+2} X_i.
\end{equation}

 \begin{figure}[htbp]
\centering
\includegraphics[]{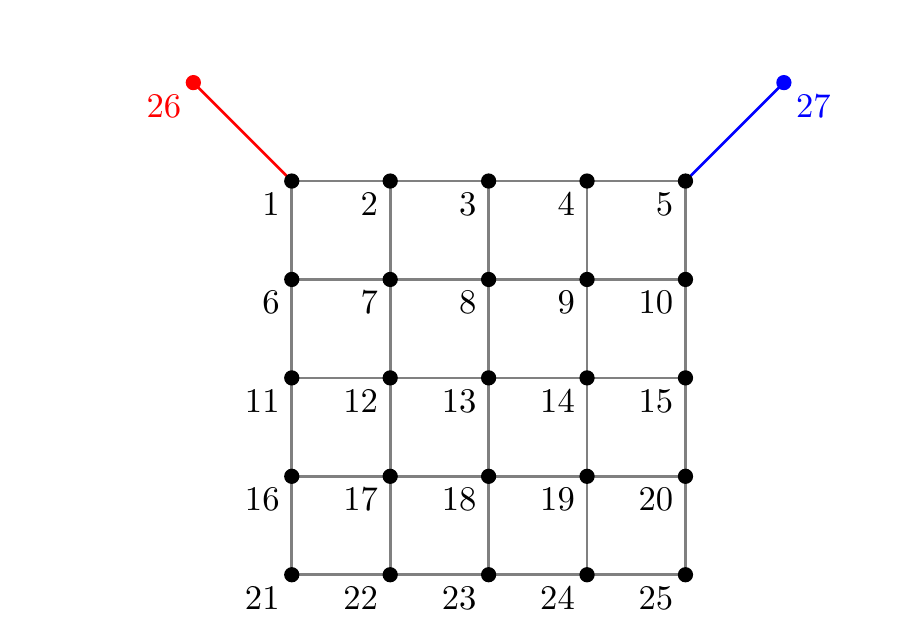}
\caption{Grid with $N=25$ nodes which defines a Hamiltonians as in Eq.~\eqref{eq:H_Z-grid}. Vertices $26$ and $27$ correspond to qubits where $H_Z$ acts with one-qubit operators with coefficients $\omega_B$ and $\omega_C$, the corresponding incident edges define two-qubit interactions with coefficients $\gamma_B$ and $\gamma_C$. (Color online)}
\label{fig:graph-grid}
\end{figure}

We want to prove that every single qubit operator $Z_i$ and two body operators $Z_i Z_j$ can be generated. 

Note that Lemma \ref{lemma:decouple} applies in this situation as well, so we can separate 

$$H_{Z1} = \omega_A H_{A1} + \omega_B Z_{N+1} + \omega_C Z_{N+1},  $$
$$H_{Z2} = \gamma_A H_{A2} + \gamma_B  Z_{v_{1}} Z_{v_{N+1}} + \gamma_C Z_{v_{n}} Z_{v_{N+2}}. $$

From this, we easily see that we can generate as well $X_{N+1}$, $X_{N+2}$ and $X_{Grid} = \sum_{i=1}^{N} X_i$. Finally notice that generating $H_{A1}$, $Z_{N+1}$, $Z_{N+1}$, $H_{A2}$, $Z_{v_{1}} Z_{v_{N+1}}$, $Z_{v_{n}} Z_{v_{N+2}} $ separately can be done applying Proposition \ref{prop:separating-AB-BA}.

To prove that any gate can be generated with these Hamiltonians, we prove that all $Z_j$ with $j\in\{1,..,n\}$ and $Z_k Z_{k+1}$ with $k \in \{1,..,n-1\}$ can be generated. In this way there is full controllability of the first horizontal line in the grid. After proving this, it directly follows that QAOA defined from the grid is universal by the zero forcing argument.

\begin{restatable}{theorem}{grid}
  Given a graph $G$ as described above, vertices numbered in the order mentioned previously, and given the Hamiltonians $H_{A1}$, $Z_{N+1}$, $Z_{N+2}$, $X_{N+1}$, $X_{N+2}$, $X_{Grid}$, $H_{A2}$, $Z_{1} Z_{N+1}$, $Z_{n} Z_{N+2}$, then for $i\in \{1,..,n-1\}$ and $j\in \{1,..,n\}$ we have that $Z_j$, $X_j$ ,$Z_i Z_{i+1} \in \langle H_{A1}$, $Z_{N+1}$, $Z_{N+1}$, $H_{A2}$, $Z_{1} Z_{N+1}$, $Z_{n} Z_{N+2}\rangle_{Lie}$. This implies universality for any $n$ on the grid.
\end{restatable}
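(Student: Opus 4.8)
The plan is to reduce the statement to establishing full single‑qubit control ($X_j,Y_j,Z_j$ individually) on every vertex of the first row together with the nearest‑neighbour couplings $Z_iZ_{i+1}$ along that row, and then to fill in the rest of the grid by a zero‑forcing step as in Theorem~\ref{thm:zero-force}. I would proceed in four stages: (i) peel off full local control on the two decorated corners $v_1$ and $v_n$ from the pendant generators; (ii) propagate this control inward along the first row; (iii) run the zero‑forcing infection downward to control the whole grid; and (iv) assemble a universal gate set and invoke Proposition~\ref{thm:reachability}.

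Stage (i) is the clean part and uses only the pendant at $v_1$. Since $v_{N+1}$ has $v_1$ as its unique neighbour, I would compute $\frac{1}{2i}[X_{N+1},Z_{v_1}Z_{v_{N+1}}]=-Z_{v_1}Y_{N+1}$, then $\frac{1}{2i}[X_{Grid},Z_{v_1}Y_{N+1}]=-Y_{v_1}Y_{N+1}$ (only the $X_{v_1}$ term of the uniform field contributes), and finally $\frac{1}{2i}[Z_{v_1}Y_{N+1},Y_{v_1}Y_{N+1}]=-X_{v_1}$, which removes the pendant index because $Y_{N+1}^2=\eye$. Commuting $X_{v_1}$ with the uniform field $H_{A1}$ yields $Y_{v_1}$ and then $Z_{v_1}=\frac{1}{2i}[X_{v_1},Y_{v_1}]$, so $v_1$ is fully controlled; the analogous sequence built from $X_{N+2},Z_{N+2}$ and $Z_{v_n}Z_{v_{N+2}}$ controls $v_n$.

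Stage (ii) is where I expect the real difficulty, and it is the main obstacle of the proof. Commuting $X_{v_1}$ with $H_{A2}$ produces $-Y_{v_1}(Z_{v_2}+Z_{v_{n+1}})$, in which the horizontal neighbour $v_2$ and the vertical neighbour $v_{n+1}$ appear symmetrically; the low‑order commutators one can form from the controls at $v_1$ and the permutation‑symmetric generators $H_{A1},H_{A2},X_{Grid}$ keep these two sites on an equal footing, and their difference surfaces only at higher order, through their distinct further neighbourhoods ($v_2$ connects along the row toward the decorated corner $v_n$, whereas $v_{n+1}$ does not). The task is to convert this higher‑order asymmetry into the clean operators $Z_{v_1}Z_{v_2}$ and $Z_{v_2}$, uniformly in $n$. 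I would organize this as an induction along the row, propagating controlled vertices inward from both decorated ends and, at each step, using the already‑controlled vertices to subtract the unwanted vertical couplings. What makes this possible is that, after the asymmetric placement of the pendants (one at $v_1=(1,1)$, one at $v_n=(1,n)$, with the distinct, rationally independent coefficients of edges $B$ and $C$), the decorated graph admits no nontrivial automorphism preserving $H_Z$ and $H_X$: the transpose is ruled out because $(n,1)$ carries no pendant, and the left–right reflection because $\gamma_B\neq\gamma_C$. Hence no symmetry obstructs the separation, and the generated Lie algebra must contain the full first‑row operator set; making the explicit cancellation precise and $n$‑independent is the crux.

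Once the first row is fully controlled, stage (iii) is routine, since the complete top row is a zero‑forcing set of the grid. Each top‑row vertex $v_k$ then has the single uninfected neighbour $v_{n+k}$ directly below it, so the infection step of Theorem~\ref{thm:zero-force} generates $X_{v_{n+k}}$, $Z_{v_{n+k}}$ and $Z_{v_k}Z_{v_{n+k}}$; iterating row by row yields all single‑ and two‑qubit operators on the grid, while the two pendant qubits are already controlled from stage (i). Stage (iv) then follows the argument of Theorem~\ref{thm:universality-1d}: full local control gives every $R_X,R_Y,R_Z$, and an adjacent coupling yields a $CNOT$ through $e^{i\frac{\pi}{4}(\eye\otimes\eye-\eye\otimes X-Z\otimes\eye+Z\otimes X)}$, so $e^{\mathcal{L}}$ contains a universal gate set and is therefore dense in $\mathcal{U}(2^{N+2})$, proving universality for every $n$.
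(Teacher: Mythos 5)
Your stages (i), (iii) and (iv) are sound and coincide with the paper's structure: the pendant at $v_1$ gives clean local control of the corner, the full first row is a zero--forcing set for the grid, and $CNOT$ plus single-qubit rotations then give universality. But stage (ii) --- which you yourself identify as ``the crux'' --- is exactly the content of the paper's proof, and your proposal does not supply it. The substitute argument you offer, namely that the decorated graph has no nontrivial automorphism commuting with $H_Z$ and $H_X$ and that ``hence no symmetry obstructs the separation, and the generated Lie algebra must contain the full first-row operator set,'' is not valid: absence of symmetry is a \emph{necessary} condition for universality (this is how the paper rules out the even-$n$ 1-D case), but it is far from sufficient. A set of generators can produce a proper subalgebra of $\mathfrak{su}(2^{N+2})$ (e.g.\ an orthogonal, symplectic, or free-fermion-type subalgebra) without any permutation symmetry being present, so ruling out automorphisms proves nothing about reachability. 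Moreover, your plan to ``use the already-controlled vertices to subtract the unwanted vertical couplings'' cannot get started: at the point where $[X_{v_1},H_{A2}]$ produces $Y_{v_1}(Z_{v_2}+Z_{v_{n+1}})$, the vertical neighbour $v_{n+1}$ is not controlled, and it cannot become controlled before the whole first row is (that is what zero forcing requires), so there is nothing available to subtract with.

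The paper closes this gap with a different idea: it never separates the horizontal from the vertical coupling at intermediate steps. Instead it runs a forward induction carrying the unwanted terms along as ``junk'' with a support invariant: one maintains operators $H^{(k)}=Z_{k-1}Z_k+R^{(k)}$, together with depleted field and coupling operators $H_{A1}^{(k)}$, $H_{A2}^{(k)}$, where all the residual terms $R^{(k)}$, $X_R^{(k)}$ are guaranteed to have \emph{no support on columns $k$ through $n$}. Because the junk can never reach the right edge of the grid, the second pendant does the cleanup all at once: $[Z_nZ_{N+2},X_{Grid}]$ gives $Y_nZ_{N+2}$ and hence $X_n$, and commuting $Y_nZ_{N+2}$ twice against $H^{(n)}=Z_{n-1}Z_n+R^{(n)}$ isolates a clean $Z_{n-1}Z_n$ (the junk $R^{(n)}$ drops out since it does not touch column $n$). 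One then sweeps backward along the row, producing all clean $Z_kZ_{k+1}$ and single-qubit operators, uniformly in $n$. This deferred-cancellation invariant is what makes the argument explicit and $n$-independent, and it is the step your proposal leaves as an unproven assertion; as it stands, your proposal is an outline with a genuine gap rather than a proof.
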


The proof is given in Appendix \ref{sec:appendix-grid}. As mentioned before this points to a more general process that allows to show universality but for brevity we won't go further in this direction.

\section{Universality for QAOA defined on hypergraphs}\label{sec:hypergraphs}
So far the Hamiltonians $H_Z$ induced by graphs define only quadratic or linear terms. We can consider higher order terms for $H_Z$ by studying a modified version of a zero forcing process on hypergraphs. Here we will consider the specific case of Hamiltonians with cubic terms as there is already work studying problems with cubic order term Hamiltonians as in the MAXE3LIN2 problem \cite{maxe3lin2}.

 From a hypergraph we can define Hamiltonians $H_Z$ with $k-$body terms where $k>2$. A hypergraph $\mathcal{G}=(V,E)$ is a generalization of a graph, it is defined by a finite set of vertices $V$ and a finite set $E$ which contains non empty subsets of $V$ which are called hyperedges. In Fig. \ref{fig:hypergraph-line} we show an example of a hypergraph defined by $V=\{v_1,v_2,...,v_6\}$ and
 
 $$E=\bigg\{ \{v_1,v_2,v_3 \}, \{v_2, v_3, v_4 \}, \{v_3, v_4, v_5 \}, \{v_4, v_5, v_6 \} \bigg\}$$
 
 This is also an example of a $3-$uniform hypergraph, a $k-$uniform hypergraph is one where all hyperedges have exactly $k$ nodes.

 \begin{figure}[htbp]
\centering
\includegraphics[]{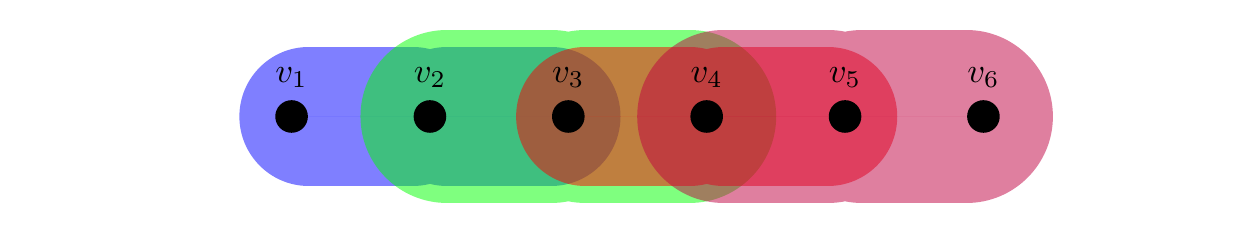}
\caption{Example of a $3-$uniform hypergraph on a line with every hyperedge contains three edges.(Color online)}
\label{fig:hypergraph-line}
\end{figure}

We will prove here universality on $3-$uniform hypergraphs with a small modification in the Hamiltonian defined from the hypergraph. Consider a hypergraph $\mathcal{G}=(V,E)$ with $V=\{1,..,n\}$ and $E=\bigg\{\{1,2 \}, \{1,2,3\},\{2,3,4\},...,\{n-2,n-1,n\} \bigg\}$. An example for $n=6$ is shown in Fig. \ref{fig:hypergraph-line} (without the $2-$edge). 

From $\mathcal{G}$ we define the following Hamiltonians

\begin{equation}\label{eq:H_Z-hyperline}
    \begin{split}
        H_Z &= \delta \sum_{\{i,j,k\} \in E} Z_i Z_j Z_k + \gamma Z_1 Z_2 +  \omega_1 Z_1 + \omega \sum_{i \neq 1} Z_i\\
        &= \delta H_{\delta} + \gamma Z_1 Z_2 +  \omega_1 Z_1 + \omega H_V,
    \end{split}
\end{equation}

\begin{equation}\label{eq:H_X-hyperline}
    \begin{split}
        H_X = \sum_{i\in V} X_i.
    \end{split}
\end{equation}

We wish to generate all $2-$qubit operators between neighbours and $1-$qubit operators on every vertex. This hyper-zero forcing is defined by starting with some initial set of infected vertices $S_1$ and a set of infected $2-$edges $S_2$; at each step pick an infected vertex, if it has only one non infected $3-$neighbour then infect the neighbour. If two infected $3-$neighbours share a a $2-$edge and then connect each infected node to the non infected one with $2-$edges. 

In the $3-$uniform hypergraph, the infection step in terms of the commutators proceeds as follows, first note that the term $Z_1 Z_2 Z_3$ can be separated from the other cubic terms and that $X_2$ can be easily separated, now consider

\begin{equation}
    \frac{1}{2i}[Z_1 Y_2, Z_1 Z_2 Z_3] = X_2 Z_3.
\end{equation}

From this we see that $X_3$ can be separated and we can proceed to separate $Z_2 Z_3 Z_4$. In this way we proceed until the end of the chain having produced all one qubit and two qubit operators between neighbours which proves universality.

We can define a hyper-zero forcing procedure on hypergraphs which allows to check if the corresponding QAOA is universal. We will write here for conciseness only the case of hypergraphs with hyperedges with at most three elements although a more generalized version is possible

\begin{definition}
Consider a hypergraph $\mathcal{G}=(V,E)$ where $|e|\leq 3$ for all $e \in E$, a hyper-zero forcing process on $\mathcal{G}$ consists of an initial set of vertices $S_1 \subseteq V$ and an initial set of $2-$edges $S_2$ which we will consider as 'infected'. The rest of the vertices and $2-$edges are non infected. Then we proceed by steps to infect other nodes, at each step a pair of infected vertices $v_1, v_2$  infects a non infected $3-$neighbour $w$ if $w$ is the only non infected $3-$neighbour of $v_1$ and $v_2$ and also the $2-$edge ${v_1, v_2}$ is infected.  We call $S_1$ and $S_2$ hyper-zero forcing sets if we can infect all the graph by starting with $S_1$ and $S_2$ infected.
\end{definition}

An analogous theorem can be derived as in the zero forcing case for relating hyper-zero forcing processes and universality. Here, for simplicity, we state such theorem for hypergraphs with hyperedges containing three or less vertices.

\begin{theorem}
Let $\mathcal{G}=(V,E)$ be a hypergraph with $|e| \leq$, $S_1 \subseteq V$ and $S_2$ a set of $2-$edges. Define $H_Z$ and $H_X$ as in Eq.~\ref{eq:H_Z-hyperline} and Eq.~\ref{eq:H_X-hyperline} and let  all coefficients in $H_Z$ be rationally independent. Consider $S_1$ as the initial set of infected nodes and $S_2$ as the set of infected edges in a hyper-zero forcing process. If $S_1$ and $S_2$ are  hyper-zero forcing sets, then $Z_k Z_j \in \langle H_Z, H_X \rangle_{Lie}$ for all $(k,j) \in E$ and $X_k \in \langle H_Z, H_X \rangle_{Lie}$ for all $k \in V$.
\end{theorem}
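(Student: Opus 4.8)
The plan is to follow the template of the proof of Theorem~\ref{thm:zero-force}, splitting the argument into a \emph{separation phase} and an \emph{infection phase}. In the separation phase I isolate the homogeneous pieces of $H_Z$ graded by the number of Pauli-$Z$ factors and then split those blocks into individually weighted terms using rational independence; in the infection phase I induct along the hyper-zero-forcing ordering, at each step manufacturing the single-qubit and two-qubit operators attached to a newly infected vertex. The base of the induction is the seeded data $S_1,S_2$, and the claimed conclusion follows once the entire vertex set has been infected.

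For the separation phase I would first invoke the weight-grading technique of Lemma~\ref{lemma:decouple} (iterated commutators with $H_X$ followed by suitable linear combinations) to peel $H_Z$ apart by $Z$-weight into its cubic block $\delta H_\delta$, its quadratic block supported on the infected two-edges, and its linear block. Since all coefficients are rationally independent, the argument of Proposition~\ref{prop:separating-AB-BA} then separates the individually weighted summands, giving $H_\delta$, each infected two-edge $Z_iZ_j$ with $\{i,j\}\in S_2$, and each seed $Z_i$ with $i\in S_1$. From $Z_i$ one obtains $Y_i=\tfrac{1}{2i}[Z_i,H_X]$ and then $X_i=\tfrac{1}{2i}[Y_i,Z_i]$, so the seeded vertices contribute all of $X_i,Y_i,Z_i$ for $i\in S_1$. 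This establishes the base data $\{X_i,Y_i,Z_i:i\in S_1\}$ together with $\{Z_iZ_j:\{i,j\}\in S_2\}$.

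For the inductive step, suppose $v_1,v_2$ are infected, the two-edge $\{v_1,v_2\}$ is infected, and $w$ is the unique uninfected $3$-neighbour of the pair. I would isolate the single cubic term $Z_{v_1}Z_{v_2}Z_w$ from $H_\delta$ as follows: the double commutator $\tfrac{1}{(2i)^2}[[H_\delta,X_{v_1}],X_{v_2}]$ retains only hyperedges containing both $v_1$ and $v_2$, yielding $Y_{v_1}Y_{v_2}\sum_{w'}Z_{w'}$ summed over the $3$-neighbours $w'$ of the pair. By the forcing hypothesis every such $w'\ne w$ is already infected, so the contaminating terms can be removed and $Z_{v_1}Z_{v_2}Z_w$ reconstructed by applying the inverse maps $\tfrac{1}{2i}[\cdot,X_{v_1}]$ and $\tfrac{1}{2i}[\cdot,X_{v_2}]$. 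Using the infected two-edge I then reproduce the displayed computation $\tfrac{1}{2i}[Z_{v_1}Y_{v_2},Z_{v_1}Z_{v_2}Z_w]=X_{v_2}Z_w$, where $Z_{v_1}Y_{v_2}=\tfrac{1}{2i}[Z_{v_1}Z_{v_2},X_{v_2}]$. From $X_{v_2}Z_w$, a bracket with $H_X$ gives $X_{v_2}Y_w$, whence $\tfrac{1}{2i}[X_{v_2}Z_w,X_{v_2}Y_w]=-X_w$, while $\tfrac{1}{2i}[X_{v_2}Z_w,Y_{v_2}]=Z_{v_2}Z_w$ (and symmetrically $Z_{v_1}Z_w$). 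Thus $w$ becomes infected with its single-qubit and two-edge operators, exactly mirroring one infection step; iterating along the forcing order then yields all $X_k$ for $k\in V$ and all $Z_kZ_j$ for $(k,j)\in E$ whenever $S_1,S_2$ are hyper-zero-forcing sets.

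The delicate point, and the one I expect to cost the most care, is the removal of the contaminating terms $Y_{v_1}Y_{v_2}Z_{w'}$ coming from already-infected $3$-neighbours $w'$ when a pair $(v_1,v_2)$ lies in several hyperedges. Bare single-qubit operators $Z_{w'}$ are \emph{not} guaranteed once $w'$ was infected during the process rather than seeded in $S_1$, so these subtraction terms must be produced from the data available at that stage. The clean fix is to strengthen the induction hypothesis to assert that $Z_aZ_bZ_c$ has already been generated for every hyperedge all of whose vertices are infected; one then applies the same $\tfrac{1}{(2i)^2}[[\cdot,X_{v_1}],X_{v_2}]$ map to each such $Z_{v_1}Z_{v_2}Z_{w'}$ to obtain precisely $Y_{v_1}Y_{v_2}Z_{w'}$ for subtraction, and verifies that infecting $w$ re-establishes this invariant for the newly completed hyperedges. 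For hypergraphs in which each infected pair forces through a unique hyperedge, such as the line of Fig.~\ref{fig:hypergraph-line}, this complication disappears and the argument reduces verbatim to the chain computation already exhibited.
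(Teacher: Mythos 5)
Your overall architecture -- separation of the commuting blocks of $H_Z$ by rational independence, then an induction along the forcing order built on the identity $\tfrac{1}{2i}[Z_{v_1}Y_{v_2},Z_{v_1}Z_{v_2}Z_w]=X_{v_2}Z_w$ -- is exactly what the paper intends: its own proof of this theorem is a single sentence deferring to the $3$-uniform line computation and to Theorem~\ref{thm:zero-force}, so your write-up goes well beyond it, and the contamination problem you flag (a forcing pair lying in several hyperedges) is a real issue the paper silently ignores. There are, however, two places where you assert what actually requires proof. The first is that your infection step produces for the new vertex $w$ only $X_w$, $Z_{v_1}Z_w$ and $Z_{v_2}Z_w$, yet the very same step \emph{consumes} the bare operators $Y_{v_1},Y_{v_2}$ of the forcing pair; so the first time a process-infected vertex serves in a forcing pair, the step as written cannot be executed, and the sentence ``$w$ becomes infected with its single-qubit \ldots operators'' is an unproved promotion of $X_w$ to full single-qubit control. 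This promotion is true but not automatic (with only \emph{one} infected neighbour coupled to $w$, the generated algebra closes on a proper subalgebra that excludes $Y_w$ and $Z_w$). Two fixes are available inside your framework: either bracket $X_w$ with the reservoir $\omega\sum_{i\notin S_1}Z_i$, which your separation phase already yields but which you never use, giving $Y_w=\tfrac{1}{2i}\bigl[\omega^{-1}\omega\sum_{i\notin S_1}Z_i,\,X_w\bigr]$ and then $Z_w=\tfrac{1}{2i}[X_w,Y_w]$; or use \emph{both} new two-edges, e.g.\ generate $X_{v_1}X_{v_2}Z_w$ from $[X_{v_1}Y_{v_2},Z_{v_2}Z_w]$ and $X_{v_1}X_{v_2}X_w$ from $Z_{v_1}Z_{v_2}X_w\propto[Z_{v_1}Z_w,Z_{v_2}Y_w]$ after local rotations, so that $[X_{v_1}X_{v_2}X_w,X_{v_1}X_{v_2}Z_w]=-2iY_w$.

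The second gap is in your strengthened invariant ($Z_aZ_bZ_c\in\mathcal{L}$ for every fully infected hyperedge): the right idea, but you only say one ``verifies'' that it is re-established, and for a newly completed hyperedge $\{a,b,w\}$ with $\{a,b\}\neq\{v_1,v_2\}$ none of your displayed commutators produces $Z_aZ_bZ_w$, since no two-edge coupling $w$ to $a$ or to $b$ need be in $\mathcal{L}$ at that moment. The clean repair makes the invariant automatic rather than inductive: for a Pauli string $P$ one has $[[P,X_v],X_v]=4P$ if $P$ anticommutes with $X_v$ and $0$ otherwise, so applying this double bracket to $H_\delta$ successively with $X_a$, $X_b$, $X_w$ (all available once $a,b,w$ are infected, since your induction does deliver the $X$'s) isolates $Z_aZ_bZ_w$ outright. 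The same filtering also streamlines the contamination removal itself: filter $H_\delta$ by $X_{v_1}$ and $X_{v_2}$ to get $\sum_{w'}Z_{v_1}Z_{v_2}Z_{w'}$, subtract the triple-filtered terms of the already-infected $3$-neighbours, and the unique uninfected neighbour $w$ is what remains. With these two repairs your argument is correct and constitutes a genuine completion of the paper's sketch.
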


\begin{proof}
Proof follows directly from arguments in the $3-$uniform hyper graph case and similarly as in the zero process case. \qed
\end{proof}

In a previous work \cite{bremner2004}, it was shown that local unitaries and unitaries generated by three-body Pauli-operators does not give rise to universality. This directly implies the following no-go result:

\begin{theorem}
Define $H_Z$ and $H_X$ as in Eq.~\ref{eq:H_Z-hyperline} and Eq.~\ref{eq:H_X-hyperline}. If the coefficient $\gamma$ in $H_Z$ is zero, then the QAOA defined by $H_Z$ and $H_X$ does not yield a universal gate set.
\end{theorem}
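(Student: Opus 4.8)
The plan is to exhibit a proper Lie subalgebra of $\mathfrak{u}(2^n)$ that contains both generators $iH_Z$ and $iH_X$ once $\gamma=0$, so that by Proposition~\ref{thm:reachability} the reachable set $e^{\mathcal{L}}$ cannot be dense in $\mathcal{U}(2^n)$. The natural candidate, which is exactly the structure underlying the cited result \cite{bremner2004}, is the real span of all Pauli strings of odd weight, where the weight $w(P)$ of a Pauli string $P$ is the number of tensor factors on which it acts nontrivially.

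First I would observe that with $\gamma=0$ every term of $H_Z$ in Eq.~\eqref{eq:H_Z-hyperline} is odd-weight: the cubic terms $Z_iZ_jZ_k$ have weight $3$, and the single-site terms $\omega_1 Z_1$ and $\omega Z_i$ have weight $1$; likewise every term $X_i$ of $H_X$ has weight $1$. The crucial point is that no weight-two term survives, since $\gamma Z_1 Z_2$ is the only even-weight generator and it has been switched off. Hence $iH_Z$ and $iH_X$ both lie in $\mathfrak{g}_{\mathrm{odd}} := \mathrm{span}_{\mathbb{R}}\{\, iP : P \text{ a Hermitian Pauli string of odd weight}\,\}$.

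The heart of the argument, and the step I expect to require the most care, is the grading lemma that $\mathfrak{g}_{\mathrm{odd}}$ is closed under the Lie bracket. By bilinearity it suffices to bracket two odd-weight Pauli strings $P,Q$. If they commute the bracket vanishes; if they anticommute then $[P,Q]=2PQ$, with $PQ$ equal to $\pm i$ times a Hermitian Pauli string. One then checks the parity bookkeeping: writing $d$ for the number of sites on which $P$ and $Q$ both act nontrivially but with different single-qubit Paulis, the weight of the product satisfies $w(PQ)\equiv w(P)+w(Q)+d \pmod 2$, while anticommutation of $P$ and $Q$ is precisely the condition that $d$ be odd. Since $w(P)+w(Q)\equiv 0$, this forces $w(PQ)$ odd, so the bracket remains in $\mathfrak{g}_{\mathrm{odd}}$. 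This is exactly the content of \cite{bremner2004}, phrased at the level of the generated algebra.

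Finally I would conclude that $\mathcal{L}=\langle iH_Z,iH_X\rangle_{Lie}\subseteq \mathfrak{g}_{\mathrm{odd}}$. But $\mathfrak{g}_{\mathrm{odd}}$ omits every even-weight operator, in particular the two-body terms $Z_kZ_j$ and any entangling two-qubit generator, and has real dimension $\tfrac{1}{2}(4^n-(-2)^n)$, which is strictly smaller than $\dim\mathfrak{su}(2^n)=4^n-1$ for $n\ge 2$. Thus $\mathcal{L}$ is a proper subalgebra, and by Proposition~\ref{thm:reachability} its reachable set $e^{\mathcal{L}}$ is a proper subgroup of $\mathcal{U}(2^n)$, so it cannot approximate arbitrary unitaries even up to a global phase. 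Hence with $\gamma=0$ the QAOA defined by $H_Z$ and $H_X$ is not universal.
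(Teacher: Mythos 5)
Your proposal is correct in substance, and it is worth pointing out that the paper itself contains no proof of this theorem: the statement is presented as an immediate corollary of the cited result \cite{bremner2004} on the non-universality of local unitaries combined with three-body Pauli interactions. What you have written is precisely the mechanism behind that citation, unpacked into a self-contained argument. With $\gamma=0$ every term of $iH_Z$ and $iH_X$ is an odd-weight Pauli string, your parity bookkeeping $w(PQ)\equiv w(P)+w(Q)+d \pmod 2$ together with the observation that anticommutation of two Pauli strings is equivalent to $d$ being odd correctly shows that $\mathfrak{g}_{\mathrm{odd}}$ is closed under the bracket, and the dimension count $\tfrac{1}{2}\bigl(4^n-(-2)^n\bigr)<4^n-1$ for $n\ge 2$ is right. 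Your route also makes visible exactly which term, namely $\gamma Z_1 Z_2$, is responsible for escaping the obstruction, which matches the role this term plays in the positive results of Sect.~\ref{sec:hypergraphs}.

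One step deserves tightening. From ``$\mathcal{L}$ is contained in a proper Lie subalgebra'' you conclude that $e^{\mathcal{L}}$ cannot be dense, but in general a connected subgroup with a proper Lie algebra \emph{can} be dense in a compact group (an irrational one-parameter winding of a torus is the standard counterexample); the paper's Proposition~\ref{thm:reachability} glosses over the same point. For your particular subalgebra the gap closes easily: $\mathfrak{g}_{\mathrm{odd}}$ is exactly the fixed-point algebra of the involutive automorphism $\theta(A)=-Y^{\otimes n}A^{T}Y^{\otimes n}$ of $\mathfrak{u}(2^n)$, and any $U=e^{A}$ with $A\in\mathfrak{g}_{\mathrm{odd}}$ satisfies $U^{T}Y^{\otimes n}U=Y^{\otimes n}$. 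Preserving this bilinear form is a closed condition, stable under products, so the closure of the group generated by the QAOA unitaries remains inside a proper closed subgroup of $\mathcal{U}(2^n)$ (an orthogonal or symplectic subgroup, depending on the parity of $n$, consistent with your dimension formula); allowing a global phase enlarges the dimension by at most one and keeps the subgroup proper. With this remark added, your proof is complete and strictly more informative than the paper's appeal to the literature.
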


\section{Conclusion and outlook}\label{sec:conclusion}

We proved the computational universality of different QAOA set-ups. In particular, we completed an earlier proof for a specific set-up given in Ref.~\cite{Lloyd-proof}, and also found two new broad classes of driver Hamiltonians that allow the corresponding QAOA unitaries to perform universal quantum computation. The first class consists of Hamiltonians with quadratic and linear terms;  the quadratic terms are distributed according to the adjacency matrix of a graph, while the coupling strength of the linear terms are grouped into two parts defined by a so-called zero forcing set of the graph. This construction was then generalized to obtain a second class of driver Hamiltonians with higher order terms corresponding to hypergraphs and generalized zero forcing sets. Here it should also be mentioned that the square grid example, presented in Sect.~\ref{sec:universality-grid}, points to a more general graph process different from zero forcing that may further advance an understanding of universality in QAOA circuits
(and perhaps also in more general quantum control set-ups). Another important generalization of our results would be to regard other mixer Hamiltonians then the type $H_X = \sum_i X_i$ considered here, e.g., one could consider $XY$ mixers \cite{QAOA-hard-soft,XY-mixers}. One could hope to determine more general conditions for universality of QAOA unitaries, which could include the above mentioned generalizations; we leave this for future work. Such general results could help in understanding the relation between the choice of Hamiltonians and the space reached by the ansatz in the algorithm, and perhaps also to obtain some analytical results about the efficiency of QAOA. We regard our work as a first step towards this goal.

\begin{acknowledgements}
We would like to thank discussions with Micha{\l} Oszmaniec. ZZ was supported by the Hungarian National Research, Development and In- novation Office (NKFIH) through Grants No. K124351, K124152, K124176 KH129601, and the Hungarian Quantum Technology National Excellence Program (Project No. 2017-1.2.1-NKP-2017- 00001); and he was also partially funded by the János Bolyai and the Bolyai+ Scholarships.
\end{acknowledgements}

\bibliographystyle{spphys}       

\bibliography{references}

\newpage 
\appendix
\noindent {\bf {\large{APPENDIX}}}

\section{Proof of some results in Section \ref{sec:universality-1d}} \label{sec:appendix-proofs}

\sepgenerators*
\begin{proof}
From Proposition \ref{prop:even-odd}, we see that $H_A$ and $H_B$ can be easily generated.
To prove that $H_{AB}$ and $H_{BA}$ can be generated, we separate the proof for $n$ odd and $n$ even case. \\

\underline{$n$ odd}:
\begin{equation}
\begin{split}
[H_{Z2}, X_{even}] &= \gamma_{AB} \sum_{j=1}^{\frac{n-1}{2}} Y_{2j} Z_{2j+1} +                        \gamma_{BA} \sum_{j=0}^{\frac{n-3}{2}} Z_{2j+1} Y_{2j+2}\\
                    &= H_{YZ}^{e}.
\end{split}
\end{equation}

Then

\begin{equation}
    \begin{split}
        [H_{YZ}^{e} , H_{Z2}] &= \gamma_{AB}^2  \sum_{j=1}^{\frac{n-1}{2}} X_{2j} + 2\gamma_{AB} \gamma_{BA}  \sum_{j=1}^{\frac{n-1}{2}} Z_{2j-1}X_{2j}Z_{2j+1}\\
        & \hspace{5mm} + \gamma_{BA}^2  \sum_{k=0}^{\frac{n-3}{2}} X_{2j+2}.
    \end{split}
\end{equation}
Note that we have suppressed the $(2i)$ that appear from the commutators. The $\gamma_{AB}^2$ and $\gamma_{BA}^2$ terms in the last line can be removed, so we define

$$H_{ZXZ} =    \sum_{j=1}^{\frac{n-1}{2}} Z_{2j-1}X_{2j}Z_{2j+1}.$$

Consider now

\begin{equation}
    \begin{split}
        [H_{YZ}^{e}, H_{ZXZ}] &=  \gamma_{AB} \sum_{j=1}^{\frac{n-1}{2}} Z_{2j-1}X_{2j} + \gamma_{BA}  \sum_{j=1}^{\frac{n-1}{2}}  X_{2j} Z_{2j+1},\\
    \end{split}
\end{equation}
and define
$$H_{Z',2}^{odd} =   \gamma_{AB} \sum_{j=1}^{\frac{n-1}{2}} Z_{2j-1}X_{2j} + \gamma_{BA}  \sum_{j=1}^{\frac{n-1}{2}}  X_{2j} Z_{2j+1}.$$
Notice that
\begin{equation}
    \begin{split}
        \gamma_{AB} H_{Z',2}^{odd} - \gamma_{BA} H_{Z2} = (\gamma_{AB}^2 - \gamma_{BA}^2) \sum_{j=0}^{\frac{n-3}{2}} Z_{2j+1}Z_{2j+2}.
    \end{split}
\end{equation}
Thus, assuming $\gamma_{AB}^2 \neq \gamma_{BA}^2$ then we have generated $H_{AB}$ and $H_{BA}$ for odd $n$.\\

\underline{$n$ even}:

Following steps analogous to the odd $n$ case, we obtain

\begin{equation}
    \begin{split}
        [H_{Z2}, X_{odd}] &=\gamma_{AB} \sum_{j=1}^{\frac{n}{2}-1}  Z_{2j}     Y_{2j+1} + \gamma_{BA}  \sum_{j=0}^{\frac{n}{2}-1}  Y_{2j+1} Z_{2j+2}\\
                         &= H_{YZ}^{oo},
    \end{split}
\end{equation}

\begin{equation}
    \begin{split}
        [H_{YZ}^{oo}, H_{Z2}] &=\gamma_{AB}^2 \sum_{j=1}^{\frac{n}{2}-1} X_{2k+1}     + 2\gamma_{AB} \gamma_{BA} \sum_{j=1}^{\frac{n}{2}-1} Z_{2j}              X_{2j+1}Z_{2j+2} \\
                & \hspace{5mm} + \gamma_{BA}^2 \sum_{j=0}^{\frac{n}{2}-1} X_{2k+1}.
    \end{split}
\end{equation}

The last line is true up to a $(2i)$ factor. In the last line we can also eliminate the $\gamma_{BA}^2$ and define

$$H_{ZZZ1} = -\gamma_{AB} Z_1 + 2\gamma_{BA} \sum_{j=1}^{\frac{n}{2}-1} Z_{2j}Z_{2j+1}Z_{2j+2}  $$

\begin{equation}
    \begin{split}
        [H_{YZ}^{oo},H_{ZZZ1}] &=  2\gamma_{AB}  \sum_{j=1}^{\frac{n}{2}-1} Z_{2j+1} Z_{2j+2} -\gamma_{AB} Z_1 Z_2 + 2\gamma_{BA} \sum_{j=1}^{\frac{n}{2}-1} Z_{2j} Z_{2j+1}= H_{12}
    \end{split}
\end{equation}

Now we perform a similar calculation but using $X_{even}$.

\begin{equation}
    \begin{split}
        [H_{Z2},X_{even}] &= \gamma_{AB} \sum_{j=1}^{\frac{n}{2}-1} Y_{2j} Z_{2j+1} +\gamma_{BA} \sum_{j=0}^{\frac{n}{2}-1} Z_{2j+1} Y_{2j+2}\\
        &= H_{YZ}^{ee}
    \end{split}
\end{equation}

\begin{equation}
    \begin{split}
        [H_{YZ}^{ee},H_{Z2}] &= \gamma_{AB}^2 \sum_{j=1}^{\frac{n}{2}-1} X_{2j} + 2\gamma_{AB} \gamma_{BA} \sum_{j=1}^{\frac{n}{2}-1} Z_{2j-1}X_{2j}Z_{2j+1} + \gamma_{BA}^2 \sum_{j=0}^{\frac{n}{2}-1} X_{2j+2}
    \end{split}
\end{equation}

We can remove the $\gamma_{BA}^2$ and define

$$H_{ZZZn} = -\gamma_{AB}Z_{n} + 2\gamma_{BA} \sum_{j=1}^{\frac{n}{2}-1} Z_{2j-1}X_{2j} Z_{2j+1} $$

\begin{equation}
    \begin{split}
        [H_{YZ},H_{ZZZn}]=2\gamma_{AB}\gamma_{BA} \sum_{j=1}^{\frac{n}{2}-1} Z_{2j-1}X_{2j} - \gamma_{BA} \gamma_{AB} Z_{n-1}X_{n} +2 \gamma_{BA}^2 \sum_{j=1}^{\frac{n}{2}-1} X_{2j} Z_{2j+1}
    \end{split}
\end{equation}
Thus, we define

$$H_{n-1,n} =  2\gamma_{AB} \sum_{j=1}^{\frac{n}{2}-1} Z_{2j-1}Z_{2j} -  \gamma_{AB} Z_{n-1}X_{n} +2 \gamma_{BA} \sum_{j=1}^{\frac{n}{2}-1} Z_{2j} Z_{2j+1}$$

Then we can generate

        \begin{equation}
            \begin{split}
                H_{(2)} &= H_{12} + H_{n-1,n}\\
        &= \gamma_{AB}(Z_1 Z_2 + Z_{n-1} Z_{n}) + 4\gamma_{AB}\sum_{j=1}^{\frac{n}{2}-2}Z_{2j+1}Z_{2j+2} + 4\gamma_{BA} \sum_{j=1}^{\frac{n}{2}-1} Z_{2j}Z_{2j+1}
            \end{split}
        \end{equation}

Now we generate

\begin{equation}
    \begin{split}
        \gamma_{AB} H_{(2)} - 4\gamma_{BA}H_{Z2} &= (\gamma_{AB}^2 - 4\gamma_{BA}^2)(Z_1 Z_2 + Z_{n-1} Z_n)\\
                            &\hspace{5mm} + (4\gamma_{AB}^2 - 4\gamma_{BA}^2) \sum_{j=1}^{\frac{n}{2}-2} Z_{2j+1} Z_{2j+2}\\
                            &= (\gamma_{AB}^2 - 4\gamma_{BA}^2)\sum_{j=0}^{\frac{n}{2}-1} Z_{2j+1}Z_{2j+2} + 3 \gamma_{AB}^2 \sum_{j=1}^{\frac{n}{2}-2} Z_{2j+1}Z_{2j+2}
    \end{split}
\end{equation}

Define $\gamma = (\gamma_{AB}^2 - 4\gamma_{BA}^2)$ and

\begin{equation}
    \begin{split}
        H_{(3)} = \sum_{j=0}^{\frac{n}{2}-1} Z_{2j+1}Z_{2j+2} + 3 \frac{\gamma_{AB}^2}{\gamma}  \sum_{j=1}^{\frac{n}{2}-2} Z_{2j+1}Z_{2j+2}
    \end{split}
\end{equation}

\begin{equation}
    \begin{split}
        H_{Z2} -\gamma_{AB}H_{(3)} &= \gamma_{AB} \sum_{j=1}^{\frac{n}{2}-1} Z_{2j} Z_{2j+1} - 3 \frac{\gamma_{AB}^2}{\gamma} \gamma_{BA}  \sum_{j=1}^{\frac{n}{2}-2} Z_{2j+1}Z_{2j+2}
    \end{split}
\end{equation}

Define $\Tilde{\gamma}_2 =  3 \frac{\gamma_{AB}}{\gamma} \gamma_{BA} $  and 

$$H_{*} =  \sum_{j=1}^{\frac{n}{2}-1} Z_{2j} Z_{2j+1} - \Tilde{\gamma}_2 \sum_{j=1}^{\frac{n}{2}-2} Z_{2j+1}Z_{2j+2} $$

On the other hand consider

\begin{equation}
    \begin{split}
        H_{(2)}-\gamma_{AB}H_{(3)} = 4\gamma_{BA}  \sum_{j=1}^{\frac{n}{2}-1} Z_{2j} Z_{2j+1} + (3\gamma_{AB} - 3\frac{\gamma_{AB}^3}{\gamma}) \sum_{j=1}^{\frac{n}{2}-2} Z_{2j+1}Z_{2j+2}
    \end{split}
\end{equation}

And define $\Tilde{\gamma}_1 = 3\gamma_{AB}(1-\frac{\gamma_{AB}^2}{\gamma})\frac{1}{4\gamma_{BA}}$

$$H_{\Box} =   \sum_{j=1}^{\frac{n}{2}-1} Z_{2j}Z_{2j+1} + \Tilde{\gamma}_1  \sum_{j=1}^{\frac{n}{2}-2} Z_{2j+1}Z_{2j+2} $$

Then

\begin{equation}
    \begin{split}
        H_{\circ} = H_{\Box} - H_{*} = (\Tilde{\gamma}_1 - \Tilde{\gamma}_2) \sum_{j=1}^{\frac{n}{2}-1} Z_{2j+1}Z_{2j+2}
    \end{split}
\end{equation}
Finally
\begin{equation}
    \begin{split}
        H_{\Box} - \frac{\Tilde{\gamma}_1}{(\Tilde{\gamma}_1 - \Tilde{\gamma}_2)}H_{\circ} =  \sum_{j=1}^{\frac{n}{2}-1} Z_{2j}Z_{2j+1} = \frac{1}{\gamma_{AB}} H_{AB}
    \end{split}
\end{equation}

Note that we have $\gamma \neq 0$, $\Tilde{\gamma}_1 \neq 0$, $\Tilde{\gamma}_2 \neq 0$, $\Tilde{\gamma}_1 \neq \Tilde{\gamma}_2$ and since  $\gamma \neq 0$, $\gamma_{AB}^2 \neq 0$, $\gamma_{BA}^2 \neq 0$, we can generate $H_{AB}$ and $H_{BA}$ which gives the result. \qed
\end{proof}

\separationX*
\begin{proof}
Let us first see that $iX_1 \in \mathcal{L}$. Consider 

\begin{equation}
    \begin{split}
    [H_{AB}, H_X] &= (2i) \sum_{j=1}^{\frac{n}{2} -1 | \frac{n-1}{2}} Z_{2j} Y_{2j+1} + Y_{2j} Z_{2j+1} 
    \end{split}
\end{equation}

Define $H_{YZ|AB} = \sum_{j=1}^{\frac{n}{2} -1 | \frac{n-1}{2}} Z_{2j} Y_{2j+1} + Y_{2j} Z_{2j+1}$ and consider
\begin{equation}
    \begin{split}
        [H_{YZ|AB}, H_{AB}] &= (2i)\sum_{j=1}^{\frac{n}{2} -1 | \frac{n-1}{2}} X_{2j+1} + (2i) \sum_{j=1}^{\frac{n}{2}  | \frac{n-1}{2}} X_{2j}
    \end{split}
\end{equation}

Notice that in the last sum, all $X$ Pauli matrices appear, except the one acting on qubit $1$. Thus,

$$H_X - \sum_{j=1}^{\frac{n}{2} -1 | \frac{n-1}{2}} X_{2j+1} + X_{2j} = X_1 $$

And we have that $X_1 \in \mathcal{L}$. Assume now that we want to generate $X_{k}$ and that we have generated $X_{k-1}$. If $k$ is even, then
\begin{equation}\label{eq:BA-X_k-1}
    \begin{split}
        [H_{BA}, X_{k-1}] &=  (2i)Y_{k-1}Z_k
    \end{split}
\end{equation}

\begin{equation}\label{eq:YZ-X_k-1}
    \begin{split}
        [Y_{k-1}Z_k, X_{k-1}] = (-2i) Z_{k-1}Z_k
    \end{split}
\end{equation}

And finally,

\begin{equation}
    \begin{split}
        [H_{XA}, Z_{k-1}Z_k] &= \sum_{j=1}^{\frac{n}{2} -1 | \frac{n-1}{2}} [X_{2j}, Z_{k-1}Z_k] = (-2i) Z_{k-1}Y_{k}
    \end{split}
\end{equation}

\begin{equation}
    \begin{split}
        [H_{BA}, Z_{k-1}Y_{k}] &= [\sum_{j=0}^{\frac{n}{2} -1 | \frac{n-3}{2}} Z_{2j+1} Z_{2j+2}, Z_{k-1}Y_{k} ] = (-2i) X_k
    \end{split}
\end{equation}

Now if $k$ is odd, 

\begin{equation}
    \begin{split}
        [H_{AB}, X_{k-1}] &= \sum_{j=1}^{\frac{n}{2} -1 | \frac{n-1}{2}} [Z_{2j}Z_{2j+1}, X_{k-1}]\\
                          &= (2i)Y_{k-1}Z_k
    \end{split}
\end{equation}

\begin{equation}
    \begin{split}
        [Y_{k-1}Z_k, X_{k-1}] = (-2i) Z_{k-1}Z_k
    \end{split}
\end{equation}

\begin{equation}
    \begin{split}
        [H_{XB}, Z_{k-1}Z_k] &= (-2i) Z_{k-1}Y_{k}
    \end{split}
\end{equation}

\begin{equation}
    \begin{split}
        [H_{AB}, Z_{k-1}Y_k] = (-2i) X_k
    \end{split}
\end{equation}
Which proves the result. \qed
\end{proof}

\section{Proofs of results in Section \ref{sec:universality-general}}\label{sec:appendix-separate}
The following proposition shows that commuting terms in a Hamiltonian can be separated. 
\begin{proposition}\label{prop:separating-AB-BA}
 Let $H_{Z1} = \omega_A H_A + \omega_B H_B$ and $H_{Z2} = \gamma_{AB} H_{AB} + \gamma_{BA} H_{BA}$ as defined above. Then, given that we can perform unitaries of the form $U_1 = e^{-iH_{Z1}t}$ then it's possible to perform unitaries of the form $U_{A}= {-iH_{A}t}$ and $U_{B}= {-iH_{B}t}$ if $\omega_A$ and $\omega_B$ are rationally independant. An analogous result holds for $H_{AB}$
\end{proposition}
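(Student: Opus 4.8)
The plan is to exploit that $H_A$ and $H_B$ commute and have integer spectra, so that the one-parameter subgroup generated by $H_{Z1}$ is confined to a two-dimensional torus inside $\mathcal{U}(2^n)$, on which a standard density argument applies. First I would observe that $H_A=\sum_j Z_{2j}$ and $H_B=\sum_j Z_{2j+1}$ commute (both are sums of mutually commuting $Z$ operators) and that all of their eigenvalues are integers. Consequently $e^{-iH_A s}$ and $e^{-iH_B s}$ are $2\pi$-periodic in $s$, and the map $(s_A,s_B)\mapsto e^{-i(s_AH_A+s_BH_B)}$ factors through the torus $\mathbb{T}^2=\mathbb{R}^2/(2\pi\mathbb{Z})^2$, yielding a continuous homomorphism $\bar\phi:\mathbb{T}^2\to\mathcal{U}(2^n)$ whose image $T=\bar\phi(\mathbb{T}^2)$ is a (sub)torus.

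Next I would identify the QAOA-reachable family $t\mapsto e^{-iH_{Z1}t}=\bar\phi\big(\omega_A t,\omega_B t\big)$ as the image under $\bar\phi$ of the straight-line flow $t\mapsto(\omega_A t,\omega_B t)\bmod 2\pi$ on $\mathbb{T}^2$. Here the hypothesis enters: since $\omega_A$ and $\omega_B$ are rationally independent (equivalently $\omega_A/\omega_B\notin\mathbb{Q}$), Kronecker's theorem (or Weyl equidistribution) guarantees that this flow line is dense in $\mathbb{T}^2$. Because $\bar\phi$ is continuous it maps closures into closures, so the closure $G=\overline{\{e^{-iH_{Z1}t}:t\in\mathbb{R}\}}$ of the reachable one-parameter subgroup satisfies $T=\bar\phi(\mathbb{T}^2)\subseteq G$.

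Finally, evaluating $\bar\phi$ at $(s,0)$ and $(0,s)$ shows that $e^{-iH_A s}$ and $e^{-iH_B s}$ lie in $G$ for every $s$, i.e. $U_A$ and $U_B$ can be performed to arbitrary precision, which is exactly the assertion within the reachability framework of Proposition~\ref{thm:reachability}. The claim for $H_{AB}$ and $H_{BA}$ then follows verbatim: these operators also commute and have integer spectra, so the identical torus/density argument applies once $\gamma_{AB}$ and $\gamma_{BA}$ are rationally independent. The one step requiring care is the density argument: I must invoke rational independence in precisely the Kronecker form and justify that the descent of the evolution to $\mathbb{T}^2$ is legitimate, which rests on the integrality of the spectra; any possible non-injectivity of $\bar\phi$ is harmless, since only the inclusion $T\subseteq G$ is needed and that follows from continuity alone.
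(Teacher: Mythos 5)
Your proof is correct and rests on exactly the same ingredients as the paper's own argument---commutativity of $H_A$ and $H_B$, integrality of their spectra (hence $2\pi$-periodicity of the corresponding exponentials), and Kronecker-type density coming from rational independence---so it is essentially the same approach. If anything, your two-dimensional torus-flow formulation makes explicit the density step that the paper leaves implicit: the paper's proof simply evaluates $e^{-iH_{Z1}t}$ at $t=2\pi/\omega_A$, where the $H_A$ factor becomes the identity, and then asserts that rational independence of $\omega_A$ and $\omega_B$ yields all of $e^{-i s H_B}$, which is precisely the one-dimensional (circle) instance of the Kronecker argument you spell out on $\mathbb{T}^2$.
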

\begin{proof}
Consider $t=\frac{2\pi}{\gamma_{A}}$ and notice that 
$$U_1 = e^{-iH_{Z1}t}=e^{-i( \omega_A H_A + \omega_B H_B)t}= e^{-i2\pi H_{A}} e^{-i\frac{2\pi \omega_B}{\omega_A}H_{B}} = e^{-i\frac{2\pi \omega_B}{\omega_A}H_{B}}  $$

Since $\omega_A$ and $\omega_B$ are rationally independent, then we can generate the unitary $U_B$ and by the same argument we can generate $U_A$. Same proof applies to  $H_{Z2}$. \qed
\end{proof}

\section{Proof for universality on square Grid}\label{sec:appendix-grid}

Here we include the proofs for Sect. \ref{sec:universality-grid}.

\grid*
\begin{proof}
To show this, we can apply commutators over the available operators and obtain the two body terms and one body terms required. This can be done in a purely algebraic way, but its also useful to relate this algebraic operations to operations over the graph. First the algebraic proof is given, and then we will relate it to operations over the graph. 

Let us begin with the fact that  $[Z_{1} Z_{N+1}, X_{Grid}] = Y_{1} Z_{N+1}$ (up to global phase). Also $[Y_{1} Z_{N+1}, Z_{1} Z_{N+1}] = X_1$ and thus we can also generate $Z_1$.

Now note
\begin{equation}
    [Y_{1} Z_{N+1}, H_{A2}] = Z_{N+1}X_{1}Z_{n+1} + Z_{N+1}X_{1}Z_{2}
\end{equation}

Then since we can generate $Y_1$, we can also generate $ Z_{N+1}Z_{1}Z_{n+1} + Z_{N+1}Z_{1}Z_{2}$. Thus we have $[Z_{N+1}Y_{1} , Z_{N+1}Z_{1}Z_{n+1} + Z_{N+1}Z_{1}Z_{2}]= X_{1}Z_{n+1} + X_{1}Z_{2}$ And then we can generate $Z_{1}Z_{n+1} + Z_{1}Z_{2}$.

Note that we have now generated a Hamiltonian $H^{(2)} = Z_{1}Z_{n+1} + Z_{1}Z_{2}$ that corresponds to edges $(1,2)$ and $(1,6)$.

We will use a similar procedure to prepare Hamiltonians of the form $H^{(k)}=Z_{k-1}Z_k + R^{(k)}$, where $R^{(k)}$ does not contain the operator $Z_k$. In this way when we generate $H^{(n)}$, we will commute it with $Z_{n} Z_{N+2}$ in order to generate $Z_{k-1}Z_k$, starting from this we will be able to generate all two body terms for the first line of the form $Z_j Z_{j+1}$.

We proceed by induction, assume that we have a Hamiltonian
\begin{equation}
    H^{(k)}=Z_{k-1}Z_k + R^{(k)}
\end{equation}
where $R^{(k)}$ does not have any terms with operators $Z_k$, nor neighbours $Z_{n+k}$ and $Z_{k+1}$ and also in any vertex on the line from $k$ to $n$ (Same for $Y$ and $X$ operators). Actually it doesn't contain operators from the $k$-th column to the $n$-th.

Note that we assume that there is a vertex numbered $k+1$. We also assume we have an operator

\begin{equation}
    H^{(k)}_{A1} = X_k + X_{k+1} + ... + X_{n} + X_{R}^{(k)}
\end{equation}

Where $X_{R}^{(k)}$  is an operator without terms containing operators with support in the neighbours of vertex $k$ and also in any vertex on the line from $k$ to $n$, as before, we assume also that it doesn't contain operators from the $k$-th column to the $n$-th.
Note that
\begin{equation}
\begin{split}
[H^{(k)}, H^{(2)}_{A1}] &= [Z_{k-1}Z_k + R^{(k)}, X_k + X_{k+1} + ... + X_{n} + X_{R}^{(k)}]\\
                        &= Z_{k-1} Y_k + [R^{(k)}, X_{R}^{(k)}]
\end{split}
\end{equation}
Where now $[R^{(k)}, X_{R}^{(k)}]$ doesn't contain operators from the $k$-th column to the $n$-th. Note that $Z_{k-1}$ was not affected by the operation since $X_{R}^{(k)}$ does not have support over vertex k-1.
Perform now the operation 
\begin{equation}
\begin{split}
[[H^{(k)}, H^{(k)}_{A1}], H^{(k)}] &=[Z_{k-1} Y_k + [R^{(k)}, X_{R}^{(k)}], Z_{k-1}Z_k + R^{(k)}]\\
&= X_k + [[R^{(k)}, X_{R}^{(k)}], R^{(k)}]
\end{split}
\end{equation}
Where $[[R^{(k)}, X_{R}^{(k)}], R^{(k)}]$ has no support from column $k$ to column $n$. Define 

\begin{equation}
    \begin{split}
        H_{A1}^{(k+1)} &= H_{A1}^{(k)} - X_k - [[R^{(k)}, X_{R}^{(k)}], R^{(k)}]\\
        &=  X_k + X_{k+1} + ... + X_{n} + X_{R}^{(k)}  - X_k - [[R^{(k)}, X_{R}^{(2)}]\\
        &= X_{k+1} + ... + X_{n} + X_{R}^{(k+1)}
    \end{split}
\end{equation}
Where now  $X_{R}^{(k+1)}$ doesn't have support on $k+1$ or neighbours or from column $k+1$ to $n$.

Assume as well there's an operator $H^{(k)}_{A2}$. This operator has terms $Z_{k}Z_{neigh(k)}$ (except $Z_{k-1} Z_k$), any other term doesn't have support in $k$ or its neighbours.

Notice now that 

\begin{equation}
\begin{split}
    [[H^{(k)}, H^{(k)}_{A1}], H^{(k)}_{A2}] &= [Z_{k-1} Y_k + [R^{(k)}, X_{R}^{(2)}], H^{(k)}_{A2} ] \\
    &= Z_{k-1} X_k Z_{k+1} + Z_{k-1} X_k Z_{k+n} + [[R^{(2)}, X_{R}^{(2)}],H^{(k)}_{A2} ]
\end{split}
\end{equation}

Where $[[R^{(2)}, X_{R}^{(2)}],H^{(k)}_{A2} ]$ has has no support on $k$, $k+1$ or from columns $k+1$ to $n$.

Now consider the commutator 

\begin{equation}
\begin{split}
    &[[H^{(k)}, H^{(2)}_{A1}], Z_{k-1} X_k Z_{k+1} + Z_{k-1} X_k Z_{k+n} + [[R^{(2)}, X_{R}^{(2)}],H^{(k)}_{A2} ] ] \\
    &= [Z_{k-1} Y_k + [R^{(2)}, X_{R}^{(2)}], Z_{k-1} X_k Z_{k+1} + Z_{k-1} X_k Z_{k+n} + [[R^{(2)}, X_{R}^{(2)}],H^{(k)}_{A2} ] ]]\\
    &=  Z_k Z_{k+1} + Z_k Z_{k+n} + \overline{R}^{(k+1)}\\
    &= Z_k Z_{k+1} + R^{(k+1)} \\
    &= H^{(k+1)}
\end{split}
\end{equation}

Where $\overline{R}^{(k+1)}$  and $R^{(k+1)}$ has no support from column $k+1$ to $n$. 

Finally define $H_{A2}^{(k+1)} = H_{A2}^{(k)} - H^{(k+1)}$ and we have all the Hamiltonians necessary for step $k+1$ with the necessary properties.

We can continue this procedure until generating Hamiltonian $H^{(n)} = Z_{n-1}Z_n + R^{(n)}$. Recall that $R^{(n)}$ has no support on column $n$ of the grid.

Now note that 

\begin{equation}
    [Z_n Z_{N+2}, X_{Grid}] = Y_n Z_{N+2}
\end{equation}
And we can thus generate $X_n$.
Commuting this with $H^{(n)}$ gives $Z_{n-1}X_n Z_{N+2}$ and commuting again with $Y_n Z_{N+2}$ we obtain $Z_{n-1}Z_{n}$. We can now repeat this process with $H_{n-1}$ and $Z_{n-1}Z_{n}$. In this way we can generate all the single Pauli operators on the line ${1,..,n}$ and also the two body operators of the form $Z_{k}Z_{k+1}$ on the line. \qed
\end{proof}



%
%








\end{document}